\pdfoutput=1

\newif\ifdraft\draftfalse 
\newif\ifanon\anonfalse   
\newif\iffull\fullfalse    
\newif\iflongrefs\longrefsfalse 
\newif\ifbackref\backreffalse 
\newif\ifsooner\soonerfalse
\newif\iflater\laterfalse
\newif\ifcamera\cameratrue     
\newif\ifcheckpagebudget\checkpagebudgetfalse
\newif\ifconference\conferencetrue
\newif\ifpages\pagesfalse      

\makeatletter \@input{texdirectives.tex} \makeatother

\documentclass[runningheads]{llncs}

\usepackage{afterpage}
\usepackage{color}
\usepackage{soul}
\usepackage{amssymb,amsmath,amsfonts}
\usepackage{extarrows}
\usepackage{version}
\usepackage{xspace}
\usepackage{graphicx}
\usepackage{listings}
\usepackage{wrapfig}
\usepackage{ifpdf}
\usepackage{semantic}
\usepackage{mathpartir} 

\usepackage[numbers,sort]{natbib}
\newcommand\citepos[1]{\citeauthor{#1}'s\ \citeyear{#1}}
\usepackage{stmaryrd}
\usepackage{subfigure}
\usepackage{multirow}
\usepackage{proof}
\usepackage{framed}
\usepackage{caption}
\usepackage{mathpartir}
\usepackage{enumitem}
\usepackage{tikz}
\usetikzlibrary{positioning}
\usetikzlibrary{arrows}
\usepackage[T1]{fontenc}
\usepackage[utf8]{inputenc}

\setcounter{tocdepth}{3}

\newcommand{\kw}[1]{\text{\ls|#1|}}
\newcommand\Goal[3]{\mbox{\ensuremath{#1 \vdash \kw{#2} : \kw{#3}}\xspace}}
\newcommand\VGoal[2]{\mbox{\ensuremath{#1 \vDash \kw{#2}}\xspace}}

\definecolor{dkblue}{rgb}{0,0.1,0.5}
\definecolor{dkgreen}{rgb}{0,0.4,0}
\definecolor{dkred}{rgb}{0.6,0,0}
\definecolor{dkpurple}{rgb}{0.7,0,1.0}
\definecolor{purple}{rgb}{0.9,0,1.0}
\definecolor{olive}{rgb}{0.4, 0.4, 0.0}
\definecolor{teal}{rgb}{0.0,0.4,0.4}
\definecolor{azure}{rgb}{0.0, 0.5, 1.0}
\definecolor{gray}{rgb}{0.5, 0.5, 0.5}
\definecolor{dkgrey}{rgb}{0.2, 0.2, 0.2}
\definecolor{lilac}{rgb}{0.70, 0.04, 0.08}
\definecolor{dkyellow}{rgb}{0.5,0.5,0}

\usepackage{hyperref}

\hypersetup{breaklinks=true}
\hypersetup{colorlinks=true}
\hypersetup{linkcolor=black}
\hypersetup{citecolor=black}
\hypersetup{filecolor=black}
\hypersetup{urlcolor=dkblue}

\lstset{language=fstar}
\let\ls\lstinline

\def\Snospace~{\S{}}

\newcommand\fstar{F$^\star$\xspace}
\newcommand\emf{EMF$^\star$\xspace}
\newcommand\metafstar{Meta-F$^\star$\xspace}
\newcommand\lowstar{Low$^\star$\xspace}

\newcommand{\comm}[3]{\ifcheckpagebudget\else\ifdraft{{\color{#1}[#2: #3]}}\fi\fi}
\newcommand{\nik}[1]{\comm{dkpurple}{Nik}{#1}}
\newcommand{\ch}[1]{\comm{teal}{CH}{#1}}
\newcommand{\aseem}[1]{\comm{magenta}{Aseem}{#1}}
\newcommand{\guido}[1]{\comm{blue}{Guido}{#1}}

\newcommand{\tahina}[1]{\comm{dkred}{Tahina}{#1}}

\newcommand{\cpc}[1]{\comm{brown}{Cl\'ement}{#1}}

\begin{document}
\def\titlestr{\metafstar: Proof Automation with \\SMT, Tactics, and Metaprograms}
\title{\titlestr}
%

\ifpages
\let\oldsection\section
\renewcommand\section{\clearpage\oldsection}
\fi

\author{
    Guido Mart\'inez\inst{1,2} \and
    Danel Ahman\inst{3} \and
    Victor Dumitrescu\inst{4} \and
    Nick Giannarakis\inst{5} \and
    Chris Hawblitzel\inst{6} \and
    C\u{a}t\u{a}lin Hri\c{t}cu\inst{2} \and
    Monal Narasimhamurthy\inst{7} \and\\
    Zoe Paraskevopoulou\inst{5} \and
    Cl\'ement Pit-Claudel\inst{8} \and
    Jonathan Protzenko\inst{6} \and\\
    Tahina Ramananandro\inst{6} \and
    Aseem Rastogi\inst{6} \and
    Nikhil Swamy\inst{6}
}
\authorrunning{Mart\'inez et al.}
\institute{
    $^1$CIFASIS-CONICET \quad
    $^2$Inria Paris \quad
    $^3$University of Ljubljana \\
    $^4$MSR-Inria Joint Centre \quad
    $^5$Princeton University \quad
    $^6$Microsoft Research \quad
    $^7$University of Colorado Boulder \quad
    $^8$MIT CSAIL
}
\maketitle              
\begin{abstract}

%

    We introduce \metafstar, a tactics and metaprogramming
    framework for the \fstar program verifier.
    The main novelty of \metafstar is allowing the use of tactics and 
    metaprogramming to discharge assertions not
    solvable by SMT, or to just simplify them into well-behaved SMT
    fragments. Plus, \metafstar can be used to
    generate verified code automatically.

\smallskip
    \metafstar is implemented as an \fstar \emph{effect}, which, given
    the powerful effect system of \fstar{}, heavily increases code reuse
    and even enables the lightweight verification of metaprograms.
    Metaprograms can be either interpreted, or compiled to efficient
    native code that can be dynamically loaded into the \fstar type-checker
    and can interoperate with interpreted code.
    Evaluation on realistic case studies shows that \metafstar
    provides substantial gains in proof development, efficiency, and robustness.

\ifcamera
\keywords{
Tactics \and
Metaprogramming \and
Program verification \and
Verification conditions \and
SMT solvers \and
Proof assistants}
\fi

\end{abstract}

\section{Introduction}


Scripting proofs using tactics and metaprogramming has a long
tradition in interactive theorem provers (ITPs), starting with Milner's
Edinburgh LCF~\cite{gordon-edinburghLCF:1979}.  In this lineage,
properties of \emph{pure} programs are specified
in expressive higher-order (and often dependently typed) logics,
and proofs are conducted using various imperative programming
languages, starting originally with ML.

Along a different axis, program verifiers like
Dafny~\cite{leino10dafny}, VCC~\cite{CohenMST10},
Why3~\cite{filliatre13why3}, and 
Liquid Haskell~\cite{liquidhaskell} 
target both pure {\em and effectful} programs,
with side-effects ranging from divergence to
concurrency, but provide relatively weak logics for
specification (e.g., first-order
logic with a few selected theories like linear arithmetic).
They work primarily by computing verification conditions (VCs)
from programs, usually relying on annotations such as pre- and postconditions,
and encoding them to automated theorem provers (ATPs) such as
satisfiability modulo theories (SMT) solvers,
often providing excellent automation.

These two sub-fields have influenced one another, though
the situation is somewhat asymmetric.
On the one hand, most interactive provers have gained support for
exploiting SMT solvers or other ATPs, providing push-button automation
for certain kinds of assertions~\cite{CzajkaK17, smtcoqCAV2017,
PaulsonB10, KaliszykU14, KaliszykU15a}.
%
%
On the other hand, recognizing the importance of interactive proofs,
Why3~\cite{filliatre13why3} interfaces with ITPs like Coq.
However, working over proof obligations translated from Why3
requires users to be familiar not only with both these systems, but
also with the specifics of the translation.
%
%
And beyond Why3 and the tools based on it~\cite{CuoqKKPSY12}, no other
SMT-based program verifiers have full-fledged support for interactive
proving, leading to several downsides:

\paragraph{\bf Limits to expressiveness} The expressiveness of program
verifiers can be limited by the ATP used.
When dealing with theories that are undecidable and
difficult to automate (e.g., non-linear arithmetic or
separation logic), proofs in ATP-based systems 
may become impossible or, at best, extremely tedious.
\guido{citations would be nice}

\paragraph{\bf Boilerplate} To work around this lack of automation,
programmers have to construct detailed proofs by hand, often repeating
many tedious yet error-prone steps, so as to provide hints to the
underlying solver to discover the proof. In contrast, ITPs
with metaprogramming facilities excel at expressing
domain-specific automation to complete such tedious proofs.

\paragraph{\bf Implicit proof context} In most program verifiers,
the logical context of a proof
is implicit in the program text and depends on the control flow and
the pre- and postconditions of preceding computations. Unlike in
interactive proof assistants, programmers have no explicit access,
neither visual nor programmatic, to this context, making proof
structuring and exploration extremely difficult.
\\



In direct response to these drawbacks, we seek a system that successfully 
combines the convenience of an
automated program verifier for the common case, while seamlessly
transitioning to an interactive proving experience for those parts of
a proof that are hard to automate. Towards this end, we
propose \metafstar, a tactics and metaprogramming framework for
the \fstar~\cite{mumon,dm4free} program verifier.

\subsection*{Highlights and Contributions of \metafstar}

\fstar has historically been more deeply rooted as an
SMT-based program verifier.
Until now, \fstar discharged VCs
exclusively by calling an SMT solver (usually Z3~\cite{MouraB08}), 
providing good automation for many common
program verification tasks, but
also exhibiting the drawbacks discussed above.


\metafstar is a framework that allows \fstar users to manipulate VCs
using \emph{tactics}. More generally, it
supports \emph{metaprogramming}, allowing programmers to script
the construction of programs, by manipulating their
syntax and customizing the way they are type-checked. This allows
programmers to (1) implement custom procedures for manipulating VCs;
(2) eliminate boilerplate in proofs and programs; and (3) to inspect
the proof state visually and to manipulate it programmatically,
addressing the drawbacks discussed above.
SMT still plays a central role in \metafstar: a typical usage involves
implementing tactics to transform VCs, so as to bring them into theories
well-supported by SMT, without needing to (re)implement full
decision procedures.
Further, the generality of \metafstar allows implementing non-trivial
language extensions (e.g., typeclass resolution) entirely as
metaprogramming libraries, without changes to the \fstar{}
type-checker.

The technical {\bf contributions} of our work include the following:



%

\paragraph*{\bf ``Meta-'' is just an effect (\autoref{sec:tac-effect})}
%
%
\metafstar is implemented using \fstar's extensible effect
system, which keeps programs and
metaprograms properly isolated.
%
%
Being first-class \fstar programs, metaprograms
are typed, call-by-value, direct-style, higher-order functional
programs, much like the original ML.
%
Further, metaprograms can be themselves verified (to a degree, see
\autoref{sec:specifying-metaprograms}) and metaprogrammed.

\paragraph*{\bf Reconciling tactics with VC generation (\autoref{sec:vcgen})}
In program verifiers the programmer often guides
the solver towards the proof by supplying intermediate 
assertions. \metafstar retains this style,
but additionally allows assertions to be solved by 
tactics. To this end, a contribution of
our work is extracting, from a VC, a proof state
encompassing all relevant hypotheses, including those implicit
in the program text.
%

\paragraph*{\bf Executing metaprograms efficiently (\autoref{sec:evaluation})}
Metaprograms are executed during type-checking.
As a baseline, they can be interpreted using \fstar's existing
(but slow) abstract machine for term normalization, or
a faster normalizer based on normalization by evaluation
(NbE)~\cite{bergerNbe, BoespflugDG11}.
%
%
For much faster execution speed, metaprograms can also be run natively.
This is achieved by combining the existing extraction mechanism of
\fstar to OCaml with a new framework for safely extending the \fstar
type-checker with such native code.
%
%

\paragraph*{\bf Examples (\autoref{sec:by-example}) and evaluation (\autoref{sec:experiments})}
We evaluate \metafstar on several case studies.
First, we present a functional correctness proof for the Poly1305
message authentication code (MAC)~\cite{poly1305}, using a novel combination
of proofs by reflection for dealing with non-linear arithmetic and SMT
solving for linear arithmetic. We measure a clear gain in proof
robustness: SMT-only proofs succeed only rarely
(for reasonable timeouts),
whereas our tactic+SMT proof is concise, never fails,
and is faster.
%
Next, we demonstrate an improvement in expressiveness,
by developing a small library for proofs of heap-manipulating programs in
separation logic, which was previously out-of-scope for \fstar.
Finally, we illustrate the ability to automatically construct verified
effectful programs, by introducing a library for metaprogramming verified
low-level parsers and serializers with applications to network
programming, where verification is accelerated by processing
the VC with tactics,
and by programmatically tweaking the SMT context.

\medskip
We conclude that tactics and metaprogramming
can be prosperously combined with VC generation and SMT solving
to build verified programs
with better, more scalable, and more robust automation.

\medskip
The full version of this paper, including appendices, can be found online in
{\url{https://www.fstar-lang.org/papers/metafstar}}.

\section{\metafstar by Example}
\label{sec:by-example}

\fstar is a general-purpose
programming language aimed at program verification.
It puts together the automation of an SMT-backed deductive
verification tool with the expressive power of a language
with full-spectrum dependent types.
Briefly, it is a functional, higher-order, effectful, dependently typed
language, with syntax loosely based on OCaml.
\fstar supports refinement types and Hoare-style specifications,
computing VCs of computations via a type-level weakest
precondition (WP) calculus packed within \emph{Dijkstra
monads}~\cite{fstar-pldi13}.
\fstar's effect system is also user-extensible~\cite{dm4free}.
Using it, one can model or embed 
imperative programming in
styles ranging from ML to
C~\cite{lowstar} and
assembly~\cite{valefstar}.
After verification, \fstar programs can be extracted to efficient
OCaml or F\# code. A first-order fragment of \fstar, 
called \lowstar, can also be extracted to C via the KreMLin compiler~\cite{lowstar}.

This paper introduces \metafstar, a metaprogramming framework for
\fstar that allows users to safely customize and extend \fstar in many
ways.
%
For instance, \metafstar can be used to preprocess or solve proof
obligations; synthesize \fstar expressions; generate
top-level definitions; and resolve implicit arguments in user-defined
ways, enabling non-trivial extensions.
%
This paper primarily discusses the first two features.
Technically, none of these features deeply increase the expressive power
of \fstar, since one could manually program in \fstar terms that can
now be metaprogrammed. However, as we will see shortly, manually
programming terms and their proofs can be so prohibitively costly as
to be practically infeasible.


\metafstar is similar to other tactic frameworks, such as
Coq's~\cite{Delahaye00} or Lean's~\cite{EbnerURAM17}, in presenting a
set of goals to the programmer, providing commands to break them down,
allowing to inspect and build abstract syntax, etc.
In this paper,
we mostly detail the characteristics where \metafstar \emph{differs}
from other engines.


This section presents \metafstar informally, displaying its usage
through case studies. We present any necessary \fstar background as needed.

\subsection{Tactics for Individual Assertions and Partial Canonicalization}
\label{sec:non-linear}



Non-linear arithmetic reasoning is crucially needed for the
verification of optimized, low-level cryptographic
primitives~\cite{vale,haclstar}, an important use case
for \fstar~\cite{everest} and other verification frameworks, including
those that rely on SMT solving alone (e.g., Dafny~\cite{leino10dafny})
as well as those that rely exclusively on tactic-based proofs (e.g.,
FiatCrypto~\cite{fiat-crypto}).
While both styles have demonstrated significant successes, we make
a case for a middle ground, leveraging the SMT solver for
the parts of a VC where it is effective,
and using tactics only where it is not.

We focus on Poly1305~\cite{poly1305}, a widely-used cryptographic MAC
that computes a series of integer multiplications and additions modulo
a large prime number
$p = 2^{130} - 5$.  Implementations of the Poly1305 multiplication and
mod operations are carefully hand-optimized to represent 130-bit
numbers in terms of smaller 32-bit or 64-bit registers, using clever
tricks;
proving their correctness
requires reasoning about long sequences of
additions and multiplications.

\paragraph{\bf Previously: Guiding SMT solvers by manually
  applying lemmas}

Prior proofs of correctness of Poly1305 and other cryptographic
primitives using SMT-based program verifiers, including
\fstar~\cite{haclstar} and Dafny~\cite{vale},
use a combination of SMT automation and manual application of lemmas.
On the plus side, SMT solvers are excellent at linear arithmetic, so
these proofs delegate all associativity-commutativity (AC) reasoning
about addition to SMT.
Non-linear arithmetic in SMT solvers, even just AC-rewriting and
distributivity, are, however, inefficient and unreliable---so much so
that the prior efforts above (and other works too~\cite{Ironclad,
  IronFleet})
simply turn off support for non-linear arithmetic
in the solver, in order not to degrade verification
performance across the board due to poor interaction of theories.
%
Instead, users need to explicitly invoke lemmas.\footnote{\ls$Lemma (requires
pre) (ensures post)$ is \fstar notation for the type of a 
computation proving \ls$pre ==> post$---we omit \ls$pre$ when it is
trivial. In \fstar's standard library, math lemmas are proved using
SMT with little or no interactions between problematic theory
combinations. These lemmas can then be explicitly invoked in larger
contexts, and are deleted during extraction.}

For instance, here is a statement and proof of a lemma about Poly1305
in \fstar. The property and its proof do not really
matter; the lines marked ``\ls$(* argh! *)$'' do. In this particular proof,
working around the solver's inability to effectively reason about
non-linear arithmetic, the programmer has spelled out basic facts about
distributivity of multiplication and addition, by calling the library
lemma \ls$distributivity_add_right$, in order to guide
the solver towards the proof.
(Below, \ls$p44$ and \ls$p88$ represent $2^{44}$ and $2^{88}$ respectively)
\begin{lstlisting}
let lemma_carry_limb_unrolled (a0 a1 a2 : nat) : Lemma (ensures (
   a0 % p44 + p44 * ((a1 + a0 / p44) % p44) + p88 * (a2 + ((a1 + a0 / p44) / p44))
   == a0 + p44 * a1 + p88 * a2)) =
 let z = a0 % p44 + p44 * ((a1 + a0 / p44) % p44)
           + p88 * (a2 + ((a1 + a0 / p44) / p44)) in
 distributivity_add_right p88 a2 ((a1 + a0 / p44) / p44); (* argh! *)
 pow2_plus 44 44;
 lemma_div_mod (a1 + a0 / p44) p44;
 distributivity_add_right p44 ((a1 + a0 / p44) % p44)
           (p44 * ((a1 + a0 / p44) / p44)); (* argh! *)
 assert (p44 * ((a1 + a0 / p44) % p44) + p88 * ((a1 + a0 / p44) / p44)
           == p44 * (a1 + a0 / p44) );
 distributivity_add_right p44 a1 (a0 / p44); (* argh! *)
 lemma_div_mod a0 p44
\end{lstlisting}
Even at this relatively small scale, needing to explicitly instantiate
the distributivity lemma is verbose and error prone.
Even worse, the user is blind while doing so: the
program text does not display the current set of available facts nor
the final goal.
Proofs at this level of abstraction are painfully detailed in some
aspects, yet also heavily reliant on the SMT solver to fill in the aspects of
the proof that are missing.
%
%

Given enough time, the solver can sometimes find a proof without 
the additional hints, but this is usually rare and dependent on context,
and almost never robust.
In this particular example we find by varying Z3's random
seed that, in an isolated setting,
the lemma is proven automatically about 32\% of the time.
%
%
The numbers are much worse for more complex proofs, and where the
context contains many facts, making this style quickly spiral out of
control.
For example, a proof of one of the main lemmas in
Poly1305, \ls$poly_multiply$, requires 41 steps of rewriting for
associativity-commutativity of multiplication, and
distributivity of addition and multiplication---making the proof
much too long to show here.
%


\paragraph{\bf SMT and tactics in \metafstar}

\label{sec:canon_semiring}
\label{sec:poly_multiply}

The listing below shows the statement and proof of \ls$poly_multiply$
in \metafstar, of which the lemma above was previously only a small part.
Again, the specific property proven is not particularly
relevant to our discussion. But, this time, the proof contains just
two steps.
\begin{lstlisting}
let poly_multiply (n p r h r0 r1 h0 h1 h2 s1 d0 d1 d2 h1 h2 hh : int) : Lemma
  (requires p > 0 /\ r1 >= 0 /\ n > 0 /\ 4 * (n * n) == p + 5 /\ r == r1 * n + r0 /\
            h == h2 * (n * n) + h1 * n + h0 /\ s1 == r1 + (r1 / 4) /\ r1 % 4 == 0 /\
            d0 == h0 * r0 + h1 * s1 /\ d1 == h0 * r1 + h1 * r0 + h2 * s1 /\
            d2 == h2 * r0 /\ hh == d2 * (n * n) + d1 * n + d0)
  (ensures (h * r) % p == hh % p) =
  let r14 = r1 / 4 in
  let h_r_expand = (h2 * (n * n) + h1 * n + h0) * ((r14 * 4) * n + r0) in
  let hh_expand = (h2 * r0) * (n * n) + (h0 * (r14 * 4) + h1 * r0
                            + h2 * (5 * r14)) * n + (h0 * r0 + h1 * (5 * r14)) in
  let b = (h2 * n + h1) * r14 in
  modulo_addition_lemma hh_expand p b;
  assert (h_r_expand == hh_expand + b * (n * n * 4 + ($-$5)))
      by (canon_semiring int_csr) (* Proof of this step by Meta--F* tactic *)
\end{lstlisting}

First, we call a single lemma about modular addition from \fstar's
standard library.
Then, we assert an equality annotated with a tactic (\ls$assert..by$).
Instead of encoding the assertion as-is to the SMT solver,
it is preprocessed by the \ls$canon_semiring$ tactic.
The tactic is presented with the asserted equality as its
goal, in an environment containing not only all variables in scope
but also hypotheses for the precondition of \ls$poly_multiply$
and the postcondition of the \ls$modulo_addition_lemma$ call
(otherwise, the assertion could not be proven).
The tactic will then canonicalize the sides of the equality, but notably
only ``up to'' linear arithmetic conversions.
Rather than fully canonicalizing the terms, the tactic just rewrites them
into a sum-of-products canonical form, leaving all the remaining
work to the SMT solver,
which can then easily and robustly discharge the
goal using linear arithmetic only.

%
This tactic works over terms in the commutative semiring of integers
(\ls$int_csr$) using proof-by-reflection~\cite{GregoireM05, gonthier2008formal,ChaiebN08,Besson06}.
Internally, it is composed of a simpler, also proof-by-reflection based 
tactic \ls$canon_monoid$ that
works over monoids, which is then ``stacked'' on itself to build
\ls$canon_semiring$.
The basic idea of proof-by-reflection is to reduce most of the proof
burden to mechanical computation, obtaining much more efficient proofs
compared to repeatedly applying lemmas.
For \ls$canon_monoid$, we begin with
a type for monoids,
a small AST representing monoid values, and
a denotation for expressions back into the monoid type.
\begin{lstlisting}
type monoid (a:Type) = { unit : a; mult : (a -> a -> a); (* + monoid laws ... *) }
type exp (a:Type) = | Unit : exp a | Var : a -> exp a | Mult : exp a -> exp a -> exp a
(* Note on syntax: $\mathsf{\#a}$ below denotes that $\mathsf{a}$ is an implicit argument *)
let rec denote (#a:Type) (m:monoid a) (e:exp a) : a =
  match e with
  | Unit -> m.unit | Var x -> x | Mult x y -> m.mult (denote m x) (denote m y)
\end{lstlisting}
To canonicalize an \ls$exp$, it is first converted to a
list of operands (\ls$flatten$) and then reflected back to the monoid
(\ls$mldenote$).
The process is proven correct, in the particular case
of equalities, by the \ls$monoid_reflect$ lemma.
%
%
%
%
\begin{lstlisting}
val flatten : #a:Type -> exp a -> list a
val mldenote : #a:Type -> monoid a -> list a -> a
let monoid_reflect (#a:Type) (m:monoid a) (e_1 e_2 : exp a)
              : Lemma (requires (mldenote m (flatten e_1) == mldenote m (flatten e_2)))
                       (ensures (denote m e_1 == denote m e_2)) = ...
\end{lstlisting}
At this stage, if the goal is \ls$t_1 == t_2$, we require two
monoidal expressions \ls$e_1$ and \ls$e_2$ such that
\ls$t_1 == denote m e_1$ and \ls$t_2 == denote m e_2$.
They are constructed by the tactic \ls$canon_monoid$ by inspecting
the \emph{syntax} of the goal, using \metafstar's reflection
capabilities (detailed ahead in~\autoref{sec:quoting}).
We have no way to prove once and for all that the expressions built by
\ls$canon_monoid$ correctly denote the terms, but this fact can be
proven automatically at each application of the tactic, by simple unification.
The tactic then applies the lemma \ls$monoid_reflect m e_1 e_2$,
and the goal is changed to \ls$mldenote m (flatten e_1) ==$ \ls$mldenote m (flatten e_2)$.
Finally, by normalization, each side will be
canonicalized by running \ls$flatten$ and \ls$mldenote$.

%
%
The \ls$canon_semiring$ tactic follows a similar approach,
and is similar to existing reflective
tactics for other proof assistants~\cite{ring, GregoireM05},
except that it only canonicalizes up to linear arithmetic,
as explained above.
The full VC for \ls$poly_multiply$
contains many other facts, e.g.,
that \ls$p$ is non-zero so the division is well-defined
and that the postcondition does indeed hold.
These obligations remain in a ``skeleton'' VC that is also easily proven
by Z3.
This proof is much easier for the programmer to write and much more
robust, as detailed ahead in \autoref{sec:canonicalization}.
The proof of Poly1305's other main lemma, \ls$poly_reduce$, is also
similarly well automated.

\paragraph{\bf Tactic proofs without SMT}

Of course, one can verify
\ls$poly_multiply$ in Coq, following the same conceptual proof used in \metafstar,
but relying on tactics only.
Our proof
\iffull
(see Fig.~\ref{fig:coqpolymultiply} in the appendix)
\else
(included in the appendix)
\fi
is 27 lines long, two of which involve the use of Coq's \ls$ring$
tactic (similar to our \ls$canon_semiring$ tactic) and \ls$omega$ tactic
for solving formulas in Presburger arithmetic. 
The remaining 25 lines include steps to destruct the
propositional structure of terms, rewrite by equalities, enriching the
context to enable automatic modulo rewriting (Coq does not fully
automatically recognize equality modulo $p$ as an equivalence relation
compatible with arithmetic operators).
While a mature proof assistant like Coq has libraries and tools to
ease this kind of manipulation, it can still be
verbose.

In contrast, in \metafstar{} all of these mundane parts of a proof are
simply dispatched to the SMT solver, which decides
linear arithmetic efficiently, beyond the quantifier-free Presburger fragment
supported by tactics like \ls$omega$, handles congruence closure natively, etc.

\subsection{Tactics for Entire VCs and Separation Logic}
\label{sec:seplogic}

A different way to invoke \metafstar is over an entire VC.
%
While the exact shape of VCs is hard to predict, users with some
experience can write tactics that find and solve particular
sub-assertions within a VC, or simply massage them into shapes better
suited for the SMT solver. We illustrate the idea on proofs for
heap-manipulating programs.

\guido{We should be careful on a previous criticism:
are we doing SL or using the standard heap model?
I think we wanted to claim that is advantage to interop
with existing code}

One verification method that has eluded \fstar until now is
separation logic, the main reason being that the pervasive ``frame
rule'' requires instantiating existentially quantified heap variables,
which is a challenge for SMT solvers, and simply too tedious for users.
With \metafstar, one can do better.
We have written a (proof-of-concept) embedding of separation logic
and a tactic (\ls$sl_auto$) that performs heap frame inference automatically.

The approach we follow consists of designing the WP specifications for
primitive stateful actions so as to make their footprint
syntactically evident.
%
The tactic then descends through VCs
until it finds an existential for heaps arising from the frame rule.
Then, by solving an equality between heap expressions (which requires
canonicalization, for which we use a variant of \ls$canon_monoid$
targeting {\em commutative} monoids)
the tactic finds the frames and instantiates the existentials.
\ch{I
  expect that previous tactics for SL in Coq would also do this,
  although I also guess they have to do a lot more. Didn't Danel
  add some references at some point?}%
Notably, as opposed to other tactic frameworks for separation
logic~\cite{Appel06, McCreight09, KrebbersTB17, nmb08htt}, this is \emph{all}
our tactic does before dispatching to the SMT solver, which can now be
effective over the instantiated VC.
%

We now provide some detail on the framework.
Below,
`\ls$emp$' represents the empty heap,
`$\bullet$' is the separating conjunction and
`\ls$r |-> v$' is the heaplet with the single reference \ls$r$ set to value \ls$v$.%
\footnote{This differs from the usual presentation where these three operators
are heap \emph{predicates} instead of heaps.}
Our development distinguishes between a ``heap''
and its ``memory'' for technical reasons, but we will treat
the two as equivalent here.
%
%
Further, \ls$defined$ is a predicate discriminating valid heaps
(as in~\cite{NanevskiVB10}), i.e., 
those built from separating conjunctions of
\emph{actually} disjoint heaps.

We first define the type of WPs and present the WP for the
frame rule:
\begin{lstlisting}
let pre = memory -> prop (* predicate on initial heaps *)
let post a = a -> memory -> prop (* predicate on result values and final heaps *)
let wp a = post a -> pre (* transformer from postconditions to preconditions *)

let frame_post (#a:Type) (p:post a) (m_0:memory) : post a =
  fun x m_1 -> defined (m_0 <*> m_1) /\ p x (m_0 <*> m_1)
let frame_wp (#a:Type) (wp:wp a) (post:post a) (m:memory) =
  exists m_0 m_1. defined (m_0 <*> m_1) /\ m == (m_0 <*> m_1) /\ wp (frame_post post m_1) m_0
\end{lstlisting}
Intuitively, \ls$frame_post p m_0$ behaves as the postcondition \ls$p$
``framed'' by \ls$m_0$, i.e., 
\ls@frame_post p m$_0$ x m$_1$@ holds when the
two heaps \ls$m_0$ and \ls$m_1$ are disjoint and \ls$p$ holds over the result value \ls$x$ and the conjoined heaps.
Then, \ls$frame_wp wp$ takes a postcondition \ls$p$ and initial heap
\ls$m$, and requires that \ls$m$ can be split into disjoint subheaps
\ls$m_0$ (the footprint) and \ls$m_1$ (the frame), such that the
postcondition \ls$p$, when properly framed, holds over the footprint.

In order to provide specifications for primitive
actions we start in small-footprint style.
For instance, below is the WP for reading a reference:
\begin{lstlisting}
let read_wp (#a:Type) (r:ref a) = fun post m_0 -> exists x. m_0 == r |-> x /\ post x m_0
\end{lstlisting}
We then insert framing wrappers around such small-footprint WPs when
exposing the corresponding stateful actions to the programmer, e.g.,
\begin{lstlisting}
val (!) : #a:Type -> r:ref a -> STATE a (fun p m -> frame_wp (read_wp r) p m)
\end{lstlisting}
%
%
To verify code written in such style, we annotate the corresponding 
programs to have their VCs processed by \ls$sl_auto$. For instance, 
for the \ls$swap$ function below, the tactic  
successfully finds the frames for the four occurrences of
the frame rule and greatly reduces the solver's work.
%
%
Even in this simple example, not performing such instantiation would cause
the solver to fail.
\begin{lstlisting}
let swap_wp (r_1 r_2 : ref int) =
    fun p m -> exists x y. m == (r_1 |> x <*> r_2 |> y) /\ p () (r_1 |> y <*> r_2 |> x)
let swap (r_1 r_2 : ref int) : ST unit (swap_wp r_1 r_2) by (sl_auto ()) =
    let x = !r_1 in let y = !r_2 in r_1 := y; r_2 := x
\end{lstlisting}
%
%


%


The \ls$sl_auto$ tactic:
(1) uses syntax inspection to unfold and traverse the goal
until it reaches a \ls$frame_wp$---say, the one for \ls$!r_2$;
(2) inspects \ls$frame_wp$'s first explicit argument (here \ls$read_wp r_2$) to
compute the references the current command requires (here \ls$r_2$);
(3) uses unification variables to build a memory expression describing the
required framing of input memory
(here \ls{r_2$\hspace{0.01cm}$ |-> ?u_1 $\bullet$ ?u_2})
and instantiates the existentials of \ls$frame_wp$ with these unification variables;
(4) builds a goal that equates this memory expression with \ls$frame_wp$'s
third argument (here \ls{r_1$\hspace{0.01cm}$ |-> x $\bullet$ r_2$\hspace{0.01cm}$ |-> y});
and (5) uses a commutative monoids tactic
(similar to \autoref{sec:non-linear})
with the heap algebra
(\ls{emp}, \ls{$\bullet$}) to canonicalize the equality
and sort the heaplets.
Next, it can solve for the unification variables component-wise,
instantiating \ls$?u_1$ to \ls$y$ and \ls$?u_2$ to \ls$r_1 |-> x$, 
and then proceed to the next \ls{frame_wp}.
%


In general, after frames are instantiated, the SMT solver can
efficiently prove the remaining assertions, such as the obligations
about heap definedness.
Thus, with relatively little effort, \metafstar brings an
(albeit simple version of a) widely used
yet previously out-of-scope program logic (i.e., separation logic) into \fstar.
To the best of our knowledge, the ability to \emph{script} separation
logic into an SMT-based program verifier, without any primitive
support, is unique.

\subsection{Metaprogramming Verified Low-level Parsers and Serializers}
\label{sec:minilowparse}






Above, we used \metafstar to manipulate VCs for user-written code.
Here, we focus instead on generating verified code automatically.
We loosely refer to the previous setting as using ``tactics'', and to
the current one as ``metaprogramming''.
In most ITPs, tactics and
metaprogramming are not distinguished; however in a program verifier
like \fstar, where some proofs are not materialized at
all (\autoref{sec:correctness}),
proving VCs of existing terms is distinct from generating new terms.
%


Metaprogramming in \fstar involves programmatically generating a
(potentially effectful) term (e.g., by constructing its syntax and
instructing \fstar how to type-check it) and processing any VCs
that arise via tactics. When applicable (e.g., when working
in a domain-specific language), metaprogramming verified code can
substantially reduce, or even eliminate, the burden of manual proofs.

We illustrate this by automating the generation of parsers and
serializers from a type definition. Of course, this
is a routine task in many mainstream
metaprogramming frameworks (e.g., Template Haskell, camlp4, etc). The
novelty here is that we produce imperative
parsers and serializers extracted to C, with proofs that they are memory safe,
functionally correct, and mutually inverse.
This section is slightly simplified, more detail can be found
\iffull
in \autoref{sec:parsers-detail}.
\else
the appendix.
\fi

We proceed in several stages. First, we program a library of pure, high-level parser
and serializer combinators, proven to be (partial) mutual inverses of
each other.
A \ls$parser$ for a type \ls$t$
is represented as a function possibly returning a \ls$t$ along with
the amount of input bytes consumed.
The type of a \ls$serializer$ for
a given \ls$p:parser t$ contains a refinement%
\footnote{
\fstar syntax for refinements is \ls$x:t\{phi\}$, denoting
the type of all \ls$x$ of type \ls$t$ satisfying \ls$phi$.
}
stating that \ls$p$ is an
inverse of the serializer.
A \ls$package$ is a dependent record of a
parser and an associated serializer.
\begin{lstlisting}
let parser t = seq byte -> option (t * nat)
let serializer #t (p:parser t) = f:(t -> seq byte){forall x. p (f x) == Some (x, length (f x))}
type package t = { p : parser t ; s : serializer p }
\end{lstlisting}
Basic combinators in the library include constructs for parsing
and serializing base values and pairs, such as the following:
\begin{lstlisting}
val p_u8 : parse u8
val s_u8 : serializer p_u8
val p_pair : parser t1 -> parser t2 -> parser (t1 * t2)
val s_pair : serializer p1 -> serializer p2 -> serializer (p_pair p1 p2)
\end{lstlisting}
%
%
Next, we define low-level versions of these combinators, which work over
mutable arrays instead of byte sequences.
These combinators are coded in the \lowstar subset of \fstar (and so can be
extracted to C) and are proven to both be memory-safe
and respect their high-level variants.
The type for low-level parsers,
\ls$parser_impl (p:parser t)$, denotes
an imperative function that reads from an array of bytes and
returns a \ls$t$,
behaving as the specificational
parser \ls$p$.
Conversely, a \ls$serializer_impl$ \ls$(s:serializer p)$
writes into an array of bytes, behaving as \ls$s$.

Given such a library, we would like to build verified, mutually
inverse, low-level parsers and serializers for specific data
formats. The task is mechanical, yet overwhelmingly tedious
by hand, with many auxiliary proof obligations of a predictable
structure: a perfect candidate for metaprogramming.

%

\paragraph{Deriving specifications from a type definition}








Consider the following \fstar type,
representing lists of exactly $18$ pairs of bytes.
\begin{lstlisting}
type sample = nlist 18 (u8 * u8)
\end{lstlisting}
%
%
The first component of our metaprogram is \ls+gen_specs+, which
generates parser and serializer specifications from a type definition.
%
%
\begin{lstlisting}
let ps_sample : package sample = _ by (gen_specs (`sample))
\end{lstlisting}
The syntax \ls$_ by tau$ is the way to call \metafstar for code
generation. \metafstar will run the metaprogram \ls$tau$ and, if
successful, replace the underscore by the result.
In this case, the \ls$gen_specs (`sample)$ inspects the syntax of
the \ls$sample$ type (\autoref{sec:quoting})
and produces the package below (\ls$seq_p$ and \ls$seq_s$ are 
sequencing combinators):
\begin{lstlisting}
let ps_sample = { p = p_nlist 18 (p_u8 `seq_p` p_u8)
                 ; s = s_nlist 18 (s_u8 `seq_s` s_u8) }
\end{lstlisting}
%
%
\paragraph{Deriving low-level implementations that match specifications}

From this pair of specifications, we can automatically generate \lowstar
implementations for them:
\begin{lstlisting}
let p_low : parser_impl ps_sample.p = _ by gen_parser_impl
let s_low : serializer_impl ps_sample.s = _ by gen_serializer_impl
\end{lstlisting}
which will produce the following low-level implementations:
\begin{lstlisting}
let p_low = parse_nlist_impl 18ul (parse_u8_impl `seq_pi` parse_u8_impl)
let s_low = serialize_nlist_impl 18ul (serialize_u8_impl `seq_si` serialize_u8_impl)
\end{lstlisting}
For simple types like the one above, the generated code is fairly
simple. However, for more complex types, using the combinator library
comes with non-trivial proof obligations. For example, even for a
simple enumeration, \ls$type color =$ \ls$ Red$ \ls$| Green$,
the parser specification is as follows:
\begin{lstlisting}
parse_synth (parse_bounded_u8 2)
            (fun x2 -> mk_if_t (x2 = 0uy) (fun _ -> Red) (fun _ -> Green))
            (fun x -> match x with | Green -> 1uy | Red -> 0uy)
\end{lstlisting}
We represent \ls$Red$ with \ls$0uy$ and \ls$Green$ with \ls$1uy$. The
parser first parses a ``bounded'' byte, with only two
values. The \ls$parse_synth$ combinator then expects functions between
the bounded byte and the datatype being parsed (\ls$color$), which
must be proven to be mutual inverses.
This proof is conceptually easy,
but for large enumerations nested deep within the structure of other
types, it is notoriously hard for SMT solvers.
Since the proof is
inherently computational, a proof that destructs the inductive type
into its cases and then normalizes is much more natural.
With our metaprogram, we can produce the term and then discharge
these proof obligations with a tactic \emph{on the spot},
eliminating them from the final VC.
We also explore simply tweaking the SMT context, again via a tactic,
with good results.
A quantitative evaluation is provided
in~\autoref{sec:eval-minilowparse}.

\section{The Design of \metafstar}
\label{sec:design}

Having caught a glimpse of the use cases for \metafstar, we now turn to its
design.
As usual in proof assistants (such as Coq, Lean and Idris), \metafstar
tactics work over a set of goals and apply primitive actions to
transform them, possibly solving some goals and generating new goals in
the process.
Since this is standard, we will focus the most on describing the aspects
where \metafstar differs from other engines.
We first describe how metaprograms are modelled as an effect
(\autoref{sec:tac-effect})
and their runtime model (\autoref{sec:executing}).
We then detail some of \metafstar's syntax inspection and building
capabilities (\autoref{sec:quoting}).
Finally, we show how to perform some (lightweight) verification of
metaprograms (\autoref{sec:specifying-metaprograms}) within \fstar.

\subsection{An Effect for Metaprogramming}
\label{sec:tac-effect}

\metafstar tactics are, at their core, programs that transform the
``proof state'', i.e. a set of goals needing to be solved.
As in Lean~\cite{EbnerURAM17} and
Idris~\cite{ChristiansenB16}, we define a monad combining exceptions
and stateful computations over a proof state,
along with actions that can access internal components such as the
type-checker.
For this we first introduce abstract types for the proof state,
goals, terms, environments, etc., together with functions to
access them, some of them shown below.





\vspace{-5mm}
\noindent
\begin{minipage}[t]{.50\textwidth}
\begin{lstlisting}
type proofstate
type goal
type term
type env
\end{lstlisting}
\end{minipage}
\begin{minipage}[t]{.50\textwidth}
\begin{lstlisting}
val goals_of : proofstate -> list goal
val goal_env  : goal -> env
val goal_type : goal -> term
val goal_solution : goal -> term
\end{lstlisting}
\end{minipage}
%
%
%
We can now define our metaprogramming monad: \ls$tac$.
It combines \fstar's existing effect for potential divergence (\ls$Div$),
with exceptions and stateful
computations over a \ls$proofstate$.
The definition of \ls$tac$, shown below, is straightforward and given in
\fstar's standard library.
Then, we use \fstar's effect extension capabilities~\cite{dm4free}
in order to
elevate the \ls$tac$ monad and its actions to an effect, dubbed
\ls$TAC$.
 
\noindent
\begin{lstlisting}
type error = exn * proofstate (* error and proofstate at the time of failure *)
type result a = | Success : a -> proofstate -> result a | Failed  : error -> result a
let tac a = proofstate -> Div (result a)
let t_return #a (x:a) = fun ps -> Success x ps
let t_bind #a #b (m:tac a) (f:a -> tac b) : tac b = fun ps -> ... (* omitted, yet simple *)
let get () : tac proofstate = fun ps -> Success ps ps
let raise #a (e:exn) : tac a = fun ps -> Failed (e, ps)
new_effect { TAC with repr = tac ; return = t_return ; bind = t_bind
                                 ; get = get ; raise = raise }
\end{lstlisting}
%
%
The \ls$new_effect$ declaration introduces \emph{computation types} of the
form \ls$TAC t wp$, where \ls$t$ is the return type and \ls$wp$ a specification.
However, until \autoref{sec:specifying-metaprograms} we shall only use
the derived form \ls$Tac t$, where the specification is trivial.
%
%
%
These computation types are distinct from their underlying monadic
representation type \ls$tac t$---users cannot directly access the
proof state except via the actions.
The simplest actions stem from the \ls$tac$ monad definition:
\ls$get : unit -> Tac proofstate$ returns the current
proof state and
\ls$raise: exn -> Tac 'a$
fails with the given exception%
\footnote{We use greek letters $\alpha$, $\beta$, ...
to abbreviate universally quantified type variables.}.
Failures can be handled using 
\ls$catch : (unit -> Tac 'a)$ \ls$ -> Tac (either exn 'a)$,
which resets the state on failure, including that of unification
metavariables.
We emphasize two points here.
First, there is no ``\ls$set$'' action.
This is to forbid metaprograms from arbitrarily replacing their
proof state, which would be unsound.
%
%
Second, the argument to \ls$catch$ must be thunked, since in \fstar{}
impure un-suspended computations are evaluated before they are passed
into functions.\ch{Just that the CBN normalizer doesn't do this, and
  we use the CBN normalizer for tactics, which are super impure
  (because of Div and quoting observing evaluation order). What a ugly hack!}
%
%

The only aspect differentiating \ls$Tac$ from other
user-defined effects is the existence of effect-specific primitive actions,
which give access to the metaprogramming engine proper.
We list here but a few:
\begin{lstlisting}
val trivial : unit -> Tac unit $\quad$ val tc : term -> Tac term $\quad$ val dump : string -> Tac unit
\end{lstlisting}
All of these are given an interpretation internally by \metafstar.
For instance, \ls$trivial$ calls into \fstar's logical simplifier to check
whether the current goal 
is a
trivial proposition and discharges it if so, failing otherwise.
The \ls$tc$ primitive queries the type-checker
to infer the type of a given term in the current 
environment
(\fstar types are a kind of terms, hence the codomain of \ls$tc$ is also \ls$term$).
This does not change the proof state; its only purpose
is to return useful information to the calling metaprograms.
%
Finally, \ls$dump$ outputs the current proof state to the user in a
pretty-printed format, in support of user interaction.

Having introduced the \ls$Tac$ effect and some basic actions,
writing metaprograms is as straightforward as writing any other \fstar
code.
For instance, here are two metaprogram combinators.
The first one repeatedly
calls its argument until it fails, returning a list of all the
successfully-returned values. The second one behaves similarly,
but folds the results with some provided folding function.
\begin{lstlisting}
let rec repeat (tau : unit -> Tac 'a) : Tac (list 'a) =
  match catch tau with   | Inl _ -> []   | Inr x -> x :: repeat tau

let repeat_fold f e tau = fold_left f e (repeat tau)
\end{lstlisting}
 
These two small combinators illustrate a few key points
of \metafstar.
As for all other \fstar effects, metaprograms are written in
applicative style, without explicit \ls{return}, \ls{bind}, or \ls{lift}
of computations (which are inserted under the hood).
This also works across different effects: \ls$repeat_fold$ can seamlessly combine
the pure \ls$fold_left$ from \fstar's list library with a metaprogram
like \ls$repeat$.
Metaprograms are also type- and effect-inferred: while
\ls$repeat_fold$ was not at all annotated, \fstar infers the polymorphic
type
\ls$('b -> 'a -> 'b) -> 'b -> (unit -> Tac 'a) -> Tac 'a$
for it.
%

It should be noted that, if lacking an effect extension feature, one
could embed metaprograms simply via the (properly abstracted) \ls$tac$
monad instead of the \ls$Tac$ effect.\ch{As Lean actually does?}
It is just more convenient to use an effect, given we are working within
an effectful program verifier already.
%
%
In what follows, with the exception
of~\autoref{sec:specifying-metaprograms} where we describe
specifications for metaprograms, there is little reliance on
using an effect; so, the same ideas could be
applied in other settings.\ch{just that it's often not our ideas,
  and the rest have already been applied in Lean, Idris, and Mtac}
  \guido{reworded to ``the same ideas''}

\subsection{Executing \metafstar Metaprograms}
\label{sec:executing}


Running metaprograms involves three steps.
First, they are \emph{reified}~\cite{dm4free} into their underlying
\ls$tac$ representation, i.e. as state-passing functions.
%
%
%
User code cannot reify metaprograms: only \fstar
can do so when about to process a goal.
 
Second, the reified term is applied to an initial proof state, and then
simply evaluated according to \fstar's dynamic semantics, for instance
using \fstar's existing normalizer.
For intensive applications,
such as proofs by reflection,
we provide faster alternatives (\autoref{sec:evaluation}).
In order to perform this second step, the proof state, which
up until this moments exists only internally to \fstar,
must be \emph{embedded} as a term, i.e., as abstract syntax.
Here is where its abstraction pays off: since metaprograms cannot
interact with a proof state except through a limited interface, it
need not be \emph{deeply} embedded as syntax.
By simply wrapping the internal proofstate into a new kind of ``alien''
term, and making the primitives aware of this wrapping, we can readily
run the metaprogram that safely carries its alien proof state around.
This wrapping of proof states is a constant-time
operation.
%

The third step is interpreting the primitives.
They are realized by functions of similar types implemented within the
\fstar type-checker, but over an internal \ls$tac$ monad and the concrete
definitions for \ls$term$, \ls$proofstate$, etc.
Hence, there is a translation involved on every call and return,
switching between embedded representations and their
concrete variants.
Take \ls$dump$, for example, with type \ls$string -> Tac unit$.
Its internal implementation, implemented within the \fstar type-checker,
has type
\ls$string -> proofstate ->$
\ls$Div (result unit)$.
When interpreting a call to it, the interpreter must \emph{unembed} the
arguments (which are representations of \fstar terms)
into a concrete string and a concrete
proofstate to pass to the internal implementation of \ls$dump$.
The situation is symmetric for the return value of
the call, which must be \emph{embedded} as a term.

\subsection{Syntax Inspection, Generation, and Quotation}
\label{sec:quoting}

If metaprograms are to be reusable over different kinds of goals,
they must be able to reflect on the goals they are invoked to solve.
Like any metaprogramming system, \metafstar offers a way to inspect and
construct the syntax of \fstar terms.
Our representation of terms as an inductive type, and the variants
of quotations, are inspired by the ones
in Idris~\cite{ChristiansenB16} and Lean~\cite{EbnerURAM17}.
\ch{The section doesn't talk about efficiency:
``As this inspection happens very often, care must be taken to make it
efficient.''}
%
%
%

%
%


%
%

\paragraph*{\bf Inspecting syntax} Internally, \fstar uses
a locally-nameless representation~\cite{ChargueraudLocallyNameless}
with explicit, delayed
substitutions.
%
To shield metaprograms from some of this internal bureaucracy, we expose
a simplified view~\cite{WadlerViews} of terms.
Below we present a few constructors from the \ls$term_view$
type:

\vspace{-5mm}
\noindent
\begin{minipage}[t]{.42\textwidth}
\begin{lstlisting}
val inspect : term -> Tac term_view
val pack : term_view -> term
\end{lstlisting}
\end{minipage}
\begin{minipage}[t]{.58\textwidth}
\begin{lstlisting}
type term_view =
  | Tv_BVar   : v:dbvar -> term_view
  | Tv_Var    : v:name  -> term_view
  | Tv_FVar   : v:qname -> term_view
  | Tv_Abs    : bv:binder -> body:term -> term_view
  | Tv_App    : hd:term -> arg:term -> term_view
  ...
\end{lstlisting}
\end{minipage}
%
%
%
The \ls$term_view$ type provides the ``one-level-deep'' structure of a
term: metaprograms must call \ls$inspect$ to reveal the structure of the
term, one constructor at a time.
The view exposes three kinds of variables: bound variables, \ls$Tv_BVar$;
named local variables \ls$Tv_Var$; and top-level fully qualified names,
\ls$Tv_FVar$.
Bound variables and local variables are distinguished since the internal
abstract syntax is locally nameless.
For metaprogramming, it is usually simpler to use a fully-named
representation, so we provide \ls$inspect$ and \ls$pack$ functions
that open and close binders appropriately to maintain this invariant.
Since opening binders requires freshness, \ls$inspect$ has effect \ls$Tac$.%
\footnote{We also provide functions \ls$inspect_ln$, \ls$pack_ln$
which stay in a locally-nameless representation and are thus pure, total
functions.}
As generating large pieces of syntax via the view easily becomes
tedious, we also provide some ways of \emph{quoting} terms:

\paragraph*{\bf Static quotations}
A static quotation \ls$`e$ is just a shorthand for statically calling
the \fstar{} parser to convert \ls$e$ into the abstract syntax of
\fstar terms above.
For instance, \ls$`(f 1 2)$ is equivalent to the following,
\begin{lstlisting}
pack (Tv_App (pack (Tv_App (pack (Tv_FVar "f"))
                              (pack (Tv_Const (C_Int 1)))))
               (pack (Tv_Const (C_Int 2))))
\end{lstlisting}

\paragraph*{\bf Dynamic quotations}
A second form of quotation is
\ls$dquote: #a:Type -> a ->$ \ls$Tac term$, an effectful
operation that is interpreted by \fstar's normalizer during
metaprogram evaluation.
It returns the syntax of its argument at the time
\ls$dquote e$ is evaluated.
Evaluating
\ls$dquote e$ substitutes all the free variables in
\ls$e$ with their current values in the execution environment,
suspends further evaluation, and returns the abstract syntax of the
resulting term. For instance, evaluating
\ls$(fun x -> dquote (x + 1)) 16$ produces the abstract
syntax of \ls$16 + 1$.
%

\paragraph*{\bf Anti-quotations}
Static quotations are useful for building big chunks of syntax concisely,
but they are of limited use if we cannot combine them with existing bits of syntax.
Subterms of a quotation are allowed to ``escape'' and be substituted
by arbitrary expressions.
We use the syntax \ls$`#t$ to denote an antiquoted \ls$t$, where \ls$t$
must be an expression of type \ls$term$ in order for the quotation to
be well-typed.
For example, \ls$`(1 + `#e)$ creates syntax for an addition
where one operand is the integer constant \ls$1$ and the
other is the term represented by \ls$e$.
%

\paragraph*{\bf Unquotation}
Finally, we provide an effectful operation,
\ls$unquote: #a:Type ->$ \ls$t:term$ \ls$-> Tac a$, which takes a term
representation \ls$t$
and an expected type for it \ls$a$ (usually inferred
from the context), and calls the \fstar type-checker to check and
elaborate the term representation into a well-typed term.
%

\subsection{Specifying and Verifying Metaprograms}
\label{sec:specifying-metaprograms}


Since we model metaprograms as a particular kind of effectful program
within \fstar, which is a program verifier, a natural question to ask is
whether \fstar can specify and verify metaprograms.
The answer is ``yes, to a degree''.

%

To do so, we must use the WP calculus for the \ls$TAC$ effect:
\ls$TAC$-computations are given computation types of the form
\ls$TAC a wp$, where \ls$a$ is the computation's result type and
\ls$wp$ is a weakest-precondition transformer of type
\ls$tacwp a$ = \ls$proofstate -> (result a -> prop) -> prop$.
However, since WPs tend to not be very intuitive, we first define two variants
of the \ls$TAC$ effect: \ls$TacH$ in ``Hoare-style'' with pre- and
postconditions and \ls$Tac$ (which we have seen before), which
only specifies the return type, but uses trivial pre- and postconditions.
%
The \ls$requires$ and \ls$ensures$ keywords below
simply aid readability of pre- and postconditions---they are identity functions.
\begin{lstlisting}
effect TacH (a:Type) (pre : proofstate -> prop) (post : proofstate -> result a -> prop) =
        TAC a (fun ps post' -> pre ps /\ (forall r. post ps r ==> post' r))
effect Tac (a:Type) = TacH a (requires (fun _ -> True)) (ensures (fun _ _ -> True))
\end{lstlisting}
Previously, we only showed the simple type for the \ls$raise$ primitive,
namely \ls$exn ->$ \ls$Tac 'a$.
In fact, in full detail and Hoare style, its type/specification is:
\begin{lstlisting}
val raise : e:exn-> TacH 'a $\,\!$ (requires (fun _ -> True))
                          (ensures (fun ps r -> r == Failed (e, ps)))
\end{lstlisting}
expressing that the primitive has no precondition, always fails with
the provided exception, and does not modify the proof state.
From the specifications of the primitives,
and the automatically obtained Dijkstra monad,
\fstar can already prove interesting properties about
metaprograms.
We show a few simple examples.

The following metaprogram is accepted by \fstar as it can conclude,
from the type of \ls$raise$, that the assertion is unreachable, and
hence \ls$raise_flow$ can have a trivial precondition (as \ls$Tac unit$
implies).
\begin{lstlisting}
let raise_flow () : Tac unit = raise SomeExn; assert False
\end{lstlisting}
For \ls$cur_goal_safe$ below, \fstar verifies that (given the precondition)
the pattern match is exhaustive.
The postcondition is also asserting that the metaprogram always succeeds
without affecting the proof state, returning some unspecified goal.
Calls to \ls$cur_goal_safe$ must statically ensure that the goal list is
not empty.
\begin{lstlisting}
let cur_goal_safe () : TacH goal (requires (fun ps -> ~(goals_of ps == [])))
                                (ensures (fun ps r -> exists g. r == Success g ps)) =
    match goals_of (get ()) with | g :: _ -> g
\end{lstlisting}
Finally, the \ls$divide$ combinator below ``splits'' the goals of a
proof state in two at a given index \ls$n$, and focuses a different
metaprogram on each.
It includes a runtime check that the given \ls$n$ is non-negative,
and raises an exception in the \ls$TAC$ effect otherwise.
Afterwards, the call to the (pure) \ls$List.splitAt$ function
requires that \ls$n$ be statically known to be non-negative,
a fact which can be proven from the specification for
\ls$raise$ and the effect definition, which defines
the control flow.
%
\begin{lstlisting}
let divide (n:int) (tl : unit -> Tac 'a) (tr : unit -> Tac 'b) : Tac ('a * 'b) =
    if n < 0 then raise NegativeN;
    let gsl, gsr = List.splitAt n (goals ()) in ...
\end{lstlisting}
This enables a style of ``lightweight'' verification of metaprograms,
where expressive invariants about their state and control-flow can be
encoded.
The programmer can exploit dynamic checks (\ls$n < 0$) and exceptions
(\ls$raise$) or static ones (preconditions), or a mixture of them, as
needed.

Due to type abstraction, though, the specifications of most primitives cannot provide
complete detail about their behavior, and deeper specifications (such as
ensuring a tactic will correctly solve a goal) cannot currently be proven,
nor even stated---to do so would require, at least, an internalization
of the typing judgment of \fstar.
While this is an exciting possibility~\cite{AnandBCST18},
we have for now only focused on
verifying basic safety properties of metaprograms, which helps users
detect errors early, and whose proofs the SMT can handle well.
Although in principle, one can also write tactics to
discharge the proof obligations of metaprograms.

\section{\metafstar, Formally}

We now describe the trust assumptions for \metafstar
(\autoref{sec:correctness}) and then how we reconcile tactics within a
program verifier, where the exact shape of VCs is not given, nor known
a priori by the user (\autoref{sec:vcgen}).


\subsection{Correctness and Trusted Computing Base (TCB)}
\label{sec:correctness}

As in any proof assistant,
tactics and metaprogramming would be rather useless if
they allowed to ``prove'' invalid judgments---care must be taken to ensure soundness.
%
We begin with a taste of the specifics of \fstar's static
semantics, which influence the trust model for \metafstar, and then
provide more detail on the TCB.

\paragraph{\bf Proof irrelevance in \fstar}
The following two rules for introducing and eliminating refinement types
are key in \fstar, as they form the basis of its
proof irrelevance.

\[
   \inferrule*[lab=T-Refine]
   {
     \Gamma |- e ~:~ t \quad
     \Gamma |= \phi[e/x]
   }
   {
     \Gamma |- e ~:~ x\!:\!t\{\phi\}
   }
   \qquad
   \qquad
   \inferrule*[lab=V-Refine]
   {
     \Gamma |- e ~:~ x\!:\!t\{\phi\}
   }
   {
     \Gamma |= \phi[e/x]
   }
\]

The $\vDash$ symbol represents \fstar's \emph{validity judgment}~\cite{dm4free}
which, at a high-level, defines a proof-irrelevant,
classical, higher-order logic.
These validity hypotheses are usually collected by the type-checker, and
then encoded to the SMT solver in bulk.
Crucially, the irrelevance of validity is what permits
efficient interaction with SMT solvers, since
reconstructing \fstar terms from SMT proofs is unneeded.

As evidenced in the rules, validity and typing are mutually recursive,
and therefore \metafstar must also construct validity derivations.
In the implementation, we model these validity goals as
holes with a ``squash'' type~\citep{Nogin02,AwodeyBauer},
where \ls$squash phi = _:unit{phi}$, i.e., a refinement of \ls$unit$.
Concretely, we model \VGoal{\Gamma}{phi} as \Goal{\Gamma}{?u}{squash phi} using 
a unification variable.
\metafstar does not construct deep solutions to squashed goals: if they
are proven valid, the variable \ls$?u$ is simply solved by the \ls{unit} value `\ls$()$'.
At any point, any such irrelevant goal can be sent to the SMT solver.
Relevant goals, on the other hand, cannot be sent to SMT.




\paragraph{\bf Scripting the typing judgment}
A consequence of validity proofs not being materialized
is that type-checking is undecidable in \fstar.
For instance: does the unit value \ls$()$ solve the hole
\Goal{\Gamma}{?u}{squash phi}?
Well, only if $\phi$ holds---a condition which no type-checker can effectively
decide.
This implies that the type-checker cannot, in general, rely on proof
terms to reconstruct a proof.\ch{Why? I have no clue why this is related
  to decidability. Isn't higher-order unification in Coq already undecidable?
  Does this mean that Coq cannot rely on proof terms?}
Hence, the primitives are designed to provide access to the typing
judgment of \fstar directly, instead of building syntax for proof terms.
One can think of \fstar's type-checker as
implementing one particular algorithmic heuristic of the typing
and validity judgments---a heuristic which happens to work well in practice.
For convenience, this default type-checking heuristic is also available to
metaprograms: this is in fact precisely what the \ls$exact$ primitive
does.
Having programmatic access to the typing judgment
also provides the flexibility to tweak VC generation as needed,
instead of leaving it to the default behavior of \fstar.
For instance, the \ls$refine_intro$ primitive implements
\textsc{T-Refine}.
When applied, it produces two new goals, including that the refinement
actually holds.
At that point, a metaprogram can run any arbitrary tactic on it,
instead of letting the \fstar type-checker collect the obligation
and send it to the SMT solver in bulk with others.




\paragraph{\bf Trust}
\label{sec:trust}

There are two common approaches for the correctness of tactic engines:
(1) the \emph{de Bruijn
criterion}~\cite{Barendregt:2001:PUD:778522.778527}, which
requires constructing full proofs (or proof terms)
and checking them at the end, hence reducing trust to
an independent proof-checker; 
and (2) the LCF style, which applies backwards reasoning while
constructing validation functions at every
step, reducing trust to primitive, forward-style implementations of the
system's inference rules.

As we wish to make use of SMT solvers
within \fstar, the first approach is not easy.
Reconstructing the proofs SMT solvers produce, if any, back into a
proper derivation remains a significant challenge (even despite recent
progress, e.g.~\cite{BohmeW10,smtcoqCAV2017}).
Further, the logical encoding from \fstar to
SMT, along with the solver itself, are already part of \fstar's
TCB: shielding \metafstar from them would not significantly increase
safety of the combined system.

Instead, we roughly follow
the LCF approach and implement \fstar's typing rules
as the basic user-facing metaprogramming actions.
However, instead of implementing the rules in forward-style and using
them to validate (untrusted) backwards-style tactics, we implement them
directly in backwards-style.
That is, they run by breaking down goals into subgoals,
instead of combining proven facts into new proven facts.
%
Using LCF style makes the primitives part of the TCB.
However, given the primitives are sound, any combination of them
also is, and any user-provided metaprogram must be safe due to the
abstraction imposed by the \ls$Tac$ effect, as discussed next.

\subsubsection{Correct Evolutions of the Proof State}

For soundness, it is imperative that tactics do not
arbitrarily drop goals from the proof state,
and only discharge them when they are solved,
or when they can be solved by other goals tracked in the proof state.
For a concrete example, consider the following program:
\begin{lstlisting}
let f : int -> int = _ by (intro (); exact (`42))
\end{lstlisting}
Here, \metafstar will create an initial proof state with
a single goal of the form
$\left[\Goal{\emptyset}{?u_1}{int -> int}\right]$ and
begin executing the metaprogram.
When applying the \ls$intro$ primitive, the proof state
transitions as shown below.
%
%
\[
    \left[ \Goal{\emptyset}{?u_1}{int -> int} \right]
    \leadsto
    \left[ \Goal{\kw{x:int}}{?u_2}{int} \right]
\]
\let\evol\preceq
Here, a solution to the original goal
has not yet been built, since it
\emph{depends} on the solution to the goal on the right hand side.
When it is solved with, say, \ls$42$, we can solve our original goal
with \ls$fun x -> 42$.
To formalize these dependencies, we say that a proof state $\phi$
\emph{correctly evolves (via $f$) to $\psi$}, denoted $\phi \evol_f \psi$,
when there is a generic transformation $f$, called a \emph{validation}, from solutions to all of $\psi$'s
goals into correct solutions for $\phi$'s goals.
When $\phi$ has $n$ goals and $\psi$ has $m$ goals, the
validation $f$ is a function from $\kw{term}^m$ into $\kw{term}^n$.
%
Validations may be composed, providing the
transitivity of correct evolution, and if a proof state $\phi$
correctly evolves (in any amount of steps) into a state with no more goals, then we have fully
defined solutions to all of $\phi$'s goals.
We emphasize that validations are not constructed
explicitly during the execution of metaprograms.
Instead we exploit unification metavariables to instantiate the
solutions automatically.

%


Note that
validations may construct solutions for more than one goal, i.e.,
their codomain is not a single term.
This is required in \metafstar, where primitive steps may not only
decompose goals into subgoals, but actually combine goals as well.
Currently, the only primitive providing this behavior is \ls$join$,
which finds a maximal common prefix of the environment of two
irrelevant goals, reverts the ``extra'' binders in both goals and
builds their conjunction.
Combining goals using \ls$join$ is especially useful for
sending multiple goals to the SMT solver in a single call.
When there are common obligations within two goals, \ls$join$ing them
before calling the SMT solver can result in a significantly faster
proof.

\iffull
The situation when this proves useful is in sending the combined goal to
SMT.
The bulk of an SMT solver's work is extending its set of facts
by using logical reasoning and triggering quantifiers via E-matching,
the latter being relatively expensive.
Once a fact enters the set, it is very efficient to find its
proof\guido{proof? yikes} which essentially means any particular fact
needs to only be proven once.
In the where there are common obligations within two goals, \ls$join$ing
them before calling the SMT solver can result in a significantly faster
proof.
Joining everything, however, can degrade performance and predictability.
Another alternative is, of course, using tactics to cut via the common
obligation, obtaining a separate goal for it and allowing to assume it
elsewhere.
While this works, it requires the programmer to identify the formula
and explicitly cut by it, or write an automated procedure to find these
formulas.
With \metafstar, we can simply reuse the solver's good support for
identifying repeats.
\fi

We check that every primitive action respects the $\evol$ preorder.
This relies on them modeling \fstar's typing rules.
For example, and unsurprisingly, the following rule for typing abstractions
is what justifies the \ls$intro$ primitive:
%
\[
   \inferrule*[lab=T-Fun]
   {
     \Gamma, x:t |- e : t'
   }
   {
     \Gamma |- \lambda(x:t).e ~:~ (x:t) -> t'
   }
\]
Then, for the proof state evolution above,
the validation function $f$
is the (mathematical, meta-level) function taking a term of type
$int$ (the solution for \ls$?u_2$) and building syntax for its abstraction over $x$.
Further, the \ls$intro$ primitive
respects the correct-evolution
preorder, by the very typing rule (T-Fun) from which it is defined.
In this manner, every typing rule induces a
syntax-building metaprogramming step.
Our primitives come from this dual interpretation of typing rules,
which ensures that logical consistency is preserved.


Since the $\evol$ relation is a preorder, and every metaprogramming
primitive we provide the user evolves the proof state according $\evol$, it is trivially the
case that the final proof state returned by a (successful)
computation is a correct evolution of the initial one.
That means that when the metaprogram terminates, one
has indeed broken down the proof obligation correctly,
and is left with a
(hopefully) simpler set of obligations to fulfill.
%
%
%
%
Note that since $\evol$ is a preorder, \ls$Tac$
provides an interesting example of monotonic state~\cite{preorders}.


\subsection{Extracting Individual Assertions}
\label{sec:vcgen}

As discussed, the logical context of a goal processed by a
tactic is not always syntactically evident in the program.
%
And, as shown in the \ls$List.splitAt$ call in \ls$divide$ from
\autoref{sec:specifying-metaprograms}, some
obligations crucially depend on the control-flow of the program.
Hence, the proof state must crucially include these assumptions
if proving the assertion is to succeed.
%
Below, we describe how \metafstar finds proper contexts in which
to prove the assertions, including control-flow information.
Notably, this process is defined over logical formulae and does not
depend at all on \fstar's WP calculus or VC generator: we believe it
should be applicable to any VC generator.

As seen in \autoref{sec:non-linear}, the basic mechanism by which
\metafstar attaches a tactic to a specific sub-goal is
\ls$assert phi by tau$. Our encoding of this expression is built
similarly to \fstar's existing \ls$assert$ construct, which is simply
sugar for a pure function \ls$_assert$ of type
\ls$phi:prop -> Lemma (requires phi) (ensures phi)$, which
essentially introduces a cut in the generated VC.
That is, the term \ls$(assert phi; e)$ roughly produces the verification condition
\ls$phi /\ (phi ==> VC_e)$, requiring a proof of \ls$phi$ at this point,
and assuming \ls$phi$ in the continuation.
%
For \metafstar, we aim to keep this style while allowing asserted
formulae to be decorated with user-provided tactics that are tasked with
proving or pre-processing them. We do this in three steps.

First, we define the following ``phantom'' predicate:
%
\begin{lstlisting}
let with_tactic (phi : prop) (tau : unit -> Tac unit) = phi
\end{lstlisting}
Here \ls$phi `with_tactic` tau$ simply associates the tactic \ls$tau$
with \ls$phi$, and is equivalent to \ls$phi$ by its definition.
%
Next, we implement the \ls$assert_by_tactic$ lemma, and desugar
\ls$assert phi by tau$ into \ls$assert_by_tactic phi$\! \ls$tau$.
This lemma is trivially provable by \fstar. 
%
\begin{lstlisting}
let assert_by_tactic (phi : prop) (tau : unit -> Tac unit)
                             : Lemma (requires (phi `with_tactic` tau)) (ensures phi) = ()
\end{lstlisting}
Given this specification,
the term \ls$(assert phi by tau; e)$ roughly produces the verification
condition
\ls$phi `with_tactic` tau /\ (phi ==> VC_e)$,
with a tagged left sub-goal,
and \ls$phi$ as an hypothesis
in the right one.
Importantly, \fstar keeps the \ls$with_tactic$ marker uninterpreted until the VC
needs to be discharged.
At that point, it may contain several annotated subformulae.
For example, suppose the VC is \ls$VC0$ below, where we distinguish
an ambient context of variables and hypotheses $\Delta$:
\begin{lstlisting}
(VC0)   $\Delta \models$ X ==> (forall (x:t). R `with_tactic` tau$_{\!1}$ /\ (R ==> S))
\end{lstlisting}
%
%
%
In order to run the \ls$tau$$_{\!1}$ tactic on \ls$R$, it must first be ``split
out''.
To do so, all logical information ``visible'' for \ls$tau$$_{\!1}$
(i.e. the set of premises of the implications traversed and
the binders introduced by quantifiers) must be included.
%
%
As for any program verifier, these hypotheses include the control flow
information, postconditions, and any other logical fact that is known to
be valid at the program point where the corresponding \ls$assert R by$\!
\ls$tau$$_{\!1}$ was called.
%
All of them are collected into $\Delta$ as the term is traversed.
In this case, the VC for $R$ is:
\begin{lstlisting}
(VC1)   $\Delta$, _:X, x:t $\models$ R
\end{lstlisting}
Afterwards, this obligation is removed from the original VC. This is done
by replacing it with \ls$True$, leaving a ``skeleton'' VC with all
remaining facts.
\begin{lstlisting}
(VC2)   $\Delta \models$ X ==> (forall (x:t). True$\hspace{0.005cm}$ /\ (R ==> S))
\end{lstlisting}
The validity of \ls$VC1$ and \ls$VC2$ implies that of \ls$VC0$.
\fstar also recursively descends into \ls$R$ and \ls$S$, in case there are more
\ls$with_tactic$ markers in them.
Then, tactics are run on the the split VCs (e.g.,
\ls$tau$$_{\!1}$ on \ls$VC1$) to break them down (or solve them).
All remaining goals, including the skeleton, are sent to the SMT solver.

Note that while the \emph{obligation} to prove \ls$R$, in \ls$VC1$, is
preprocessed by the tactic \ls$tau$$_{\!1}$, the \emph{assumption} \ls$R$ for the
continuation of the code, in \ls$VC2$, is left as-is.
This is crucial for tactics such as the canonicalizer from
\autoref{sec:non-linear}: if the skeleton \ls$VC2$
contained an assumption for the canonicalized equality
it would not help the SMT solver show the uncanonicalized postcondition.

However, not all nodes marked with \ls$with_tactic$
are proof obligations.
Suppose \ls$X$ in the previous VC was given as \ls$(Y `with_tactic` tau$$_{\!2}$\ls$)$.
In this case, one certainly does not want to attempt to prove
\ls$Y$, since it is an hypothesis.
While it would be \emph{sound} to prove it and replace
it by \ls$True$, it is useless at best,
and usually irreparably affects the system. Consider asserting the tautology
\ls$(False `with_tactic` tau) ==> False$.

Hence, \fstar splits such obligations only in strictly-positive
positions.\ch{what does strictly positive mean for logical formulas? I
  only know positive and negative. Strictly-positive is for
  inductives.}\nik{it just means, as also in the case of inductives,
  not to the left of any arrow/implication}\ch{Should we explain
  why we need the strict?}
On all others, \fstar simply drops the \ls$with_tactic$ marker, e.g., by just unfolding the definition of \ls$with_tactic$.
For regular uses of the \ls$assert..by$ construct, however, all occurrences are
strictly-positive.
It is only when (expert) users use the \ls$with_tactic$ marker directly
that the above discussion might become relevant.




Formally, the soundness of this whole approach is given by the following
metatheorem, which justifies the splitting out of sub-assertions,
and by the correctness of evolution detailed in \autoref{sec:correctness}.
The proof of \autoref{thm:split} is straightforward, and included in
\iffull
\autoref{app:correctness}.
\else
the appendix.
\fi
We expect an analogous property
to hold in other verifiers as well (in particular, it
holds for first-order logic).

\begin{theorem}
    \label{thm:split}
    Let $E$ be a context with 
    $\Gamma \vdash E : prop \Rightarrow prop$,
    and $\phi$ a squashed proposition such that $\Gamma \vdash \phi : prop$.
    Then the following holds:
    \[
        \infer{\Gamma \vDash E[\phi]}
              {\Gamma \vDash E[\top] & \Gamma, \gamma(E) \vDash \phi}
    \]
    where $\gamma(E)$ is the set of binders $E$ introduces.
    If $E$ is strictly-positive, then the reverse implication holds as well.
\end{theorem}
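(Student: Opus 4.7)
The plan is to proceed by structural induction on the context $E$, ranging over the connectives that arise in VCs: conjunction, implication, and universal quantifier. The base case $E = [\cdot]$ is immediate: $\gamma(E) = \emptyset$, $E[\top] = \top$, and $E[\phi] = \phi$, so both implications collapse to tautologies.

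For the forward direction, I would observe that the hypothesis $\Gamma, \gamma(E) \vDash \phi$ states precisely that $\phi$ is equivalent to $\top$ at the hole's position, i.e., under the accumulated context $\gamma(E)$. Classical logic validates congruence of its connectives with respect to logical equivalence, so $E[\phi]$ and $E[\top]$ are interchangeable. Each inductive case proceeds by extending the ambient context with the locally-bound hypothesis or variable and applying the IH to the sub-context. For instance, for $E = \psi \Rightarrow E'$ with the hole in $E'$, we have $\gamma(E) = \psi, \gamma(E')$; assuming $\psi$ reduces the goal to $E'[\phi]$ and the side hypothesis to $\Gamma, \psi, \gamma(E') \vDash \phi$, so the IH on $E'$ applies. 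Conjunction, universal quantification, and the cases where the hole lies in a negative position (e.g., $E = E' \Rightarrow \psi$) are handled uniformly, since the forward direction only needs congruence, not monotonicity.

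For the reverse direction under strict positivity, I would establish the two conclusions independently. First, $\Gamma \vDash E[\top]$ follows by monotonicity: strictly-positive contexts are monotone in the hole, and $\phi \Rightarrow \top$. Second, $\Gamma, \gamma(E) \vDash \phi$ follows by a direct induction on $E$: strict positivity restricts $E$ to constructors in which the hole is always accessible from the outside, so at each step we strip off the surrounding connective, move its binder or hypothesis into the ambient context, and recurse. For example, from $\Gamma \vDash \psi \Rightarrow E'[\phi]$ we obtain $\Gamma, \psi \vDash E'[\phi]$, to which the IH on $E'$ applies, and analogously for $\wedge$ and $\forall$.

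The main subtlety lies in the treatment of quantifier binders during the descent: one must handle freshness carefully so that $\gamma(E)$ accurately represents the context accumulated at the hole without variable capture, which is also where the lightweight formal statement glosses over detail. The strict-positivity hypothesis is essential for the reverse implication: were the hole to sit in a negative position (e.g., $E = E' \Rightarrow \psi$), then $\Gamma \vDash E[\phi]$ might hold vacuously precisely because $\phi$ makes the hypothesis unsatisfiable, so $\phi$ itself would not be extractable from $E[\phi]$—mirroring exactly the motivating counterexample \textsf{(False `with\_tactic` tau) ==> False} discussed earlier.
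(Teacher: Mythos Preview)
Your proposal is correct, and your reverse direction matches the paper's: both proceed by structural induction on the strictly-positive context, peeling off each connective and pushing its binder or hypothesis into the ambient $\Gamma$.

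Your forward direction, however, takes a genuinely different route. You argue by structural induction on $E$, appealing at each constructor to congruence of the connective with respect to logical equivalence. The paper instead avoids case analysis on $E$ entirely: it abstracts the context as a function $S = \lambda f.\, E[f(\gamma(E))]$, uses propositional extensionality to turn $\Gamma,\gamma(E) \vDash \phi$ into $\Gamma,\gamma(E) \vDash \phi = \top$, lifts this to an equality of the abstracted hole-fillers $\lambda\gamma(E).\,\phi$ and $\lambda\gamma(E).\,\top$ via functional extensionality, and then applies $S$ to both sides. This yields $\Gamma \vDash E[\phi] = E[\top]$ in one shot, uniformly for any context $E$---including forms such as $\lambda(x{:}E).\,t$ or $x{:}E\{t\}$ that lie outside the propositional connectives you enumerate. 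Your inductive argument is perfectly sound for the connectives you list (and those are indeed the ones that matter for VCs), but it would need a case for every context former in the general grammar, whereas the paper's extensionality trick is constructor-agnostic. Conversely, your approach is more elementary in that it does not rely on propositional or functional extensionality being available in the object logic.
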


\section{Executing Metaprograms Efficiently}
\label{sec:evaluation}


\fstar provides three complementary mechanisms for running metaprograms.
The first two, \fstar's call-by-name (CBN)
interpreter and a (newly implemented) call-by-value (CBV)
NbE-based evaluator, support strong
reduction---henceforth we refer to these as ``normalizers''.
%
In addition, we design and implement a new \emph{native plugin}
mechanism that allows both normalizers to interface with \metafstar
programs extracted to OCaml, reusing \fstar's existing extraction pipeline for
this purpose.
%
%
Below we provide
a brief overview of the three mechanisms.

\subsection{CBN and CBV Strong Reductions}

As described in \autoref{sec:tac-effect}, metaprograms, once reified,
are simply \fstar terms of type \ls$proofstate -> Div (result a)$.
As such, they can be reduced using \fstar's existing computation
machinery, a CBN interpreter for strong reductions based on the Krivine
abstract machine (KAM)~\cite{cregut2007strongly,Krivine2007}.
Although complete and highly configurable, \fstar's KAM interpreter
is slow, designed primarily for converting types during dependent
type-checking and higher-order unification.

Shifting focus to long-running metaprograms, such as
tactics for proofs by reflection,
we implemented an NbE-based strong-reduction evaluator for \fstar
computations. The evaluator is implemented in \fstar and extracted to
OCaml (as is the rest of \fstar{}), thereby inheriting CBV from OCaml.
It is similar to~\citepos{BoespflugDG11} NbE-based
strong-reduction for Coq,
although we do not implement their low-level,
OCaml-specific tag-elimination optimizations---nevertheless, 
it is already vastly more efficient than the KAM-based interpreter.
%
%
%
%
%

\ch{Nobody worried that changing between CBN and CBV at effect Div
  does change the semantics?}

\subsection{Native Plugins \& Multi-language Interoperability}


Since \metafstar programs are just \fstar programs, they can also be
extracted to OCaml and natively compiled.
Further, they can be dynamically linked into \fstar as ``plugins''.
Plugins can be directly called from the type-checker, as is done for the
primitives, which is much more efficient than interpreting them.
%
%
However, compilation has a cost, and it is not convenient to compile
every single invocation.
Instead, \metafstar enables users to choose which metaprograms
are to be plugins
(presumably those expected to be computation-intensive, e.g.
\ls$canon_semiring$).
Users can choose their native plugins, while still quickly
scripting their higher-level logic in the interpreter.

This requires (for higher-order metaprograms) a form of multi-language
interoperability, converting between representations of terms used in
the normalizers and in native code.
%
%
We designed a small multi-language calculus, with ML-style polymorphism,
to model the interaction between normalizers and plugins and
conversions between terms.
\iffull
We outline it in
\autoref{app:native}.
\else
See the appendix for details.
\fi

Beyond the notable efficiency gains of running compiled code vs.
interpreting it, native metaprograms also require fewer embeddings.
%
Once compiled, metaprograms work over the internal, \emph{concrete}
types for \ls$proofstate$, \ls$term$, etc., instead of over their
\fstar representations (though still treating them abstractly).
Hence, compiled metaprograms can call primitives without needing to
embed their arguments or unembed their results.
Further, they can call each other directly as well.
Indeed, operationally there is little operational difference
between a primitive and a compiled metaprogram used as a plugin.

Native plugins, however, are not a replacement for the normalizers,
for several reasons.
First, the overhead in compilation might not
be justified by the execution speed-up.
Second, extraction to OCaml erases types and proofs.
As a result, the \fstar \emph{interface} of the native plugins can only
contain types that can also be expressed in OCaml, thereby excluding
full-dependent types---internally, however, they can be dependently typed.
%
Third, being OCaml programs, native plugins do not support reducing
open terms, which is often required.
However, when the programs treat their open arguments parametrically,
relying on parametric polymorphism, the normalizers can pass such arguments
\emph{as-is}, thereby recovering open reductions in some cases.
This allows us to
use native datastructure implementations (e.g. \ls{List}),
which is much faster than using the normalizers, even for open terms.
\iffull
We discuss this briefly in
\autoref{app:native}.
\else
See the appendix for details.
\fi


%

%
%

\newcommand{\bnfalt}{{\bf \,\,\mid\,\,}}
\newcommand{\bnfdef}{{\bf ::=~}}

\newcommand\eqdef{\mathrel{\overset{\makebox[0pt]{\mbox{\normalfont\tiny\sffamily def}}}{=}}}

\newcommand{\lk}[1]{%
  \texttt{%
    \fontdimen2\font=2pt   
    #1}}

\definecolor{dkblue}{rgb}{0,0.1,0.5}
\definecolor{dkgreen}{rgb}{0,0.5,0}
\definecolor{dkred}{rgb}{0.7,0,0}
\definecolor{dkpurple}{rgb}{0.7,0,0.4}

\newcommand{\highlight}[1]{%
  \colorbox{dkpurple}{$\displaystyle#1$}}

\newcommand{\blue}[1]{{\textcolor{dkblue}{#1}}}
\newcommand{\red}[1]{{\textcolor{dkpurple}{#1}}}

\newcommand{\source}[1]{\blue{#1}}
\newcommand{\sourcelk}[1]{\blue{\lk{#1}}}
\newcommand{\sexp}{\source{e}}
\newcommand{\sexpp}{\source{e'}}
\newcommand{\sexpi}{\source{e_1}}
\newcommand{\sexpii}{\source{e_2}}
\newcommand{\sexpip}{\source{e_1'}}
\newcommand{\sexpiip}{\source{e_2'}}

\newcommand{\sval}{\source{v}}
\newcommand{\svali}{\source{v_1}}
\newcommand{\svalii}{\source{v_2}}
\newcommand{\sctxeq}{\cong_{ctx}}
\newcommand{\stau}{\source{\tau}}
\newcommand{\ssig}{\source{\sigma}}
\newcommand{\staui}{\source{\tau_1}}
\newcommand{\stauii}{\source{\tau_2}}
\newcommand{\staup}{\source{\tau'}}
\newcommand{\sdelta}{\source{\delta}}
\newcommand{\sdeltai}{\source{\delta_1}}
\newcommand{\sdeltaii}{\source{\delta_2}}
\newcommand{\sdeltap}{\source{\delta'}}
\newcommand{\stunit}{\sourcelk{unit}}
\newcommand{\snat}{\sourcelk{nat}}
\newcommand{\sttop}{\source{\top}}
\newcommand{\sarrow}[2]{#1\source{\rightarrow}#2}
\newcommand{\sprod}[2]{#1\source{\times}#2}
\newcommand{\ssum}[2]{#1\source{+}#2}
\newcommand{\sall}[2]{\source{\forall}#1\source{.}#2}

\newcommand{\senv}{\source{\Gamma}}
\newcommand{\stenv}{\source{\Delta}}
\newcommand{\senvemp}{\source{\cdot}}
\newcommand{\shastyp}[2]{#1 : #2}
\newcommand{\senvext}[3]{#1,\shastyp{#2}{#3}}
\newcommand{\senvtext}[2]{#1,#2}

\newcommand{\tenv}{\target{\Gamma}}
\newcommand{\tenvemp}{\target{\cdot}}
\newcommand{\tenvext}[2]{#1,#2}

\newcommand{\sesubst}{\source{\delta}}
\newcommand{\sesubstemp}{\source{\cdot}}
\newcommand{\sesubstsingl}[2]{\source{[}{#1}\source{\mapsto}{#2}\source{]}}
\newcommand{\sesubsttwo}[4]{\source{[}{#1}\source{\mapsto}{#2},{#3}\source{\mapsto}{#4}\source{]}}
\newcommand{\sesubstlist}[1]{\source{[}{#1}\source{]}}
\newcommand{\bind}[2]{{#1}\source{\mapsto}{#2}}
\newcommand{\sesubstext}[3]{#1\source{[}{#2}\source{\mapsto}{#3}\source{]}}

\newcommand{\ssubst}[2]{\source{[}#2\source{/}#1\source{]}}
\newcommand{\tsubst}[2]{\target{[}#2\target{/}#1\target{]}}
\newcommand{\tesubstsingl}[2]{\target{[}{#1}\target{\mapsto}{#2}\target{]}}

\newcommand{\sunit}{\sourcelk{()}}
\newcommand{\svar}[1]{\source{#1}}
\newcommand{\sabs}[3]{\source{\lambda} #1 \source{:} #2 \source{.} #3}
\newcommand{\sapp}[2]{#1~#2}
\newcommand{\stabs}[2]{\source{\Lambda #1\source{.} #2}}
\newcommand{\stapp}[2]{#1~#2}
\newcommand{\spair}[2]{\source{(}#1 \source{,} #2\source{)}}
\newcommand{\sproj}[2]{#1\source{.}#2}
\newcommand{\sproji}[1]{#1\source{.i}}
\newcommand{\sfst}[1]{\sproj{#1}{\source{1}}}
\newcommand{\ssnd}[1]{\sproj{#1}{\source{2}}}
\newcommand{\sinl}[1]{\sourcelk{inl}~#1}
\newcommand{\sinr}[1]{\sourcelk{inr}~#1}
\newcommand{\scasex}[5]{\sourcelk{case}\source{(} #1  \source{,x:#2.} #3  \source{,x:#4.} #5 \source{)}}
\newcommand{\scasey}[5]{\sourcelk{case}\source{(} #1  \source{,y:#2.} #3  \source{,y:#4.} #5 \source{)}}
\newcommand{\scase}[5]{\sourcelk{case}\source{(} #1  \source{,} #2 \source{.} #3  \source{,} #4 \source{.} #5 \source{)}}
\newcommand{\salien}[3]{\source{\{}#1 \source{\}}_{#2}^{#3}}
\newcommand{\sopaque}[1]{\source{\langle}#1 \source{\rangle}}
\newcommand{\spar}[1]{\source{(}#1\source{)}} 

\newcommand{\target}[1]{\red{#1}}
\newcommand{\targetlk}[1]{\red{\lk{#1}}}

\newcommand{\texp}{\target{e}}
\newcommand{\texpp}{\target{e'}}
\newcommand{\tval}{\target{v}}
\newcommand{\tvali}{\target{v_1}}
\newcommand{\tvalii}{\target{v_2}}
\newcommand{\texpi}{\target{e_1}}
\newcommand{\texpii}{\target{e_2}}
\newcommand{\texpip}{\target{e_1'}}
\newcommand{\texpiip}{\target{e_2'}}

\newcommand{\tunit}{\targetlk{()}}
\newcommand{\tvar}[1]{\target{#1}}
\newcommand{\tabs}[2]{\target{\lambda} #1 \target{.} #2}
\newcommand{\tapp}[2]{#1~#2}
\newcommand{\tpair}[2]{\target{(}#1 \target{,} #2\target{)}}
\newcommand{\tproj}[2]{#1\target{.}#2}
\newcommand{\tproji}[1]{#1\target{.i}}
\newcommand{\tfst}[1]{\sproj{#1}{\target{1}}}
\newcommand{\tsnd}[1]{\sproj{#1}{\target{2}}}
\newcommand{\tinl}[1]{\targetlk{inl}~#1}
\newcommand{\tinr}[1]{\targetlk{inr}~#1}
\newcommand{\tcasex}[3]{\targetlk{case}\target{(}#1  \target{,x.} #2  \target{,x.} #3 \target{)}}
\newcommand{\tcasey}[3]{\targetlk{case}\target{(}#1  \target{,y.} #2  \target{,y.} #3 \target{)}}
\newcommand{\tcase}[6]{\targetlk{case}\target{(}  \target{,} #2 \target{.} #3  \target{,} #4 \target{.} #5 \target{)}}
\newcommand{\talien}[3]{\target{[}#1 \target{]}_{#2}^{#3}}
\newcommand{\topaque}[1]{\target{\langle}#1 \target{\rangle}}
\newcommand{\tpar}[1]{\target{(}#1\target{)}} 

\newcommand{\iswf}[2]{#1 \vdash #2 ~\lk{wf}}
\newcommand{\hastyp}[4]{#1;~#2 \vdash #3 : #4}
\newcommand{\isok}[3]{#1;~#2 \vdash #3 ~\lk{ok}}

\newcommand{\kap}[4]{\mathcal{K}_{#1}^{#2}(#3,#4)}
\newcommand{\mfun}[2]{\lk{fun}~#1\Rightarrow#2}
\newcommand{\tcoerce}[3]{\mathcal{\target{C}}_{#1}^{#2}(#3)}
\newcommand{\scoerce}[3]{\mathcal{\source{C}}_{#1}^{#2}(#3)}
\newcommand{\prot}[3]{\mathcal{P}_{#2}^{#3}(#1)}

\newcommand{\seval}{\source{~\rightarrow~}}
\newcommand{\sevals}{\source{~\rightarrow^{*}~}}
\newcommand{\sevalm}[1]{\source{~\rightarrow}^{#1}~}
\newcommand{\teval}{\target{~\rightarrow~}}
\newcommand{\tevals}{\target{~\rightarrow^{*}~}}
\newcommand{\tevalm}[1]{\target{~\rightarrow}^{#1}~}
\newcommand{\sevalr}[1]{\source{~\xrightarrow{\textsc{#1}}~}}
\newcommand{\tevalr}[1]{\target{~\xrightarrow{\textsc{#1}}~}}
\newcommand{\sevalrs}[1]{\source{~\xrightarrow{\textsc{#1}}^{*}~}}
\newcommand{\tevalrs}[1]{\target{~\xrightarrow{\textsc{#1}}^{*}~}}
\newcommand{\isval}[1]{\lk{is\_val}~#1}
\newcommand{\scon}[1]{\source{E[}#1\source{]}}
\newcommand{\tcon}[1]{\target{E[}#1\target{]}}

\newcommand{\trans}{~\source{\rightsquigarrow}~}

\newcommand{\sdup}{\stabs{\svar{B}}{\sabs{\svar{x}}{\svar{B}}{\spair{\svar{x}}{\svar{x}}}}}
\newcommand{\sduptyp}{\sall{\svar{B}}{\sarrow{\svar{B}}{\sprod{\svar{B}}{\svar{B}}}}}
\newcommand{\sapptyp}{\sall{\svar{A}}{\sarrow{\spar{\sduptyp}}{\sarrow{\svar{A}}{\sprod{\svar{A}}{\svar{A}}}}}}
\newcommand{\sapptypfree}{\sarrow{\spar{\sduptyp}}{\sarrow{\svar{A}}{\sprod{\svar{A}}{\svar{A}}}}}

\newcommand{\sarith}{\source{(1 + 2 + \dots)}}
\newcommand{\sinctyp}{\sarrow{\snat}{\snat}}

\newcommand{\sapptypalt}
{\sall{\svar{A}}
  {\sall{\svar{B}}
    {\sarrow{\spar{\sarrow{\svar{A}}{\svar{B}}}}{\sarrow{\svar{A}}{\svar{B}}}}}}
\newcommand{\sapptypaltf}
{\sall{\svar{B}}
     {\sarrow{\spar{\sarrow{\svar{A}}{\svar{B}}}}{\sarrow{\svar{A}}{\svar{B}}}}}
\newcommand{\sapptypaltff}
   {\sarrow{\spar{\sarrow{\svar{A}}{\svar{B}}}}{\sarrow{\svar{A}}{\svar{B}}}}

\newcommand{\sidf}[1]{\sabs{\svar{y}}{#1}{\svar{y}}}

\newcommand{\sappid}
{ \stabs
    {\svar{A}}
    {\sabs{\svar{x}}{\svar{A}}
    {\sabs{\svar{f}}
      {\sarrow{\svar{A}}{\svar{A}}}
      {{\sapp{\svar{f}}{\spar{\sapp{\spar{\sidf{\svar{A}}}}{\svar{x}}}}}}
    }}
}
\newcommand{\sappidf}
{\sabs{\svar{x}}{\svar{A}}
  {\sabs{\svar{f}}
      {\sarrow{\svar{A}}{\svar{A}}}
      {{\sapp{\svar{f}}{\spar{\sapp{\spar{\sidf{\svar{A}}}}{\svar{x}}}}}}}}
\newcommand{\sappidarg}[1]
{\sabs{\svar{f}}
      {\sarrow{\svar{A}}{\svar{A}}}
      {{\sapp{\svar{f}}{\spar{\sapp{\spar{\sidf{\svar{A}}}}{#1}}}}}}
\newcommand{\sappidargs}[2]
{\sapp{#2}{\spar{\sapp{\spar{\sidf{\svar{A}}}}{#1}}}}

\newcommand{\sappidtyp}
  {\sall{\svar{A}}
      {\sarrow{\svar{A}}{\sarrow{\spar{\sarrow{\svar{A}}{\svar{A}}}}{\svar{A}}}}}
\newcommand{\sappidtypf}
  {\sarrow{\svar{A}}{\sarrow{\spar{\sarrow{\svar{A}}{\svar{A}}}}{\svar{A}}}}
\newcommand{\sappidtypff}
  {\sarrow{\spar{\sarrow{\svar{A}}{\svar{A}}}}{\svar{A}}}

\newcommand{\tfapp}{\tabs{\tvar{f}}{\tabs{\tvar{x}}{\sapp{\tvar{f}}{\tvar{x}}}}}
\newcommand{\tfapparg}[1]{\tabs{\tvar{x}}{\sapp{#1}{\tvar{x}}}}
\newcommand{\tfappargs}[2]{\sapp{#1}{#2}}
\newcommand{\tinc}{\tabs{\tvar{x}}{\target{x+1}}}
\newcommand{\tid}{\tabs{\tvar{x}}{\tvar{x}}}

\newcommand{\rexp}[3]{#2 \source{\lesssim}_{#1} #3}
\newcommand{\rval}[3]{#2 \source{\leq}_{#1} #3}
\newcommand{\renv}[2]{\source{\mathcal{G}\llbracket} #1 \source{\rrbracket}_{#2}}
\newcommand{\inrenv}[4]{(#1,#2) \in \renv{#3}{#4}}
\newcommand{\rtypenv}[1]{\source{\mathcal{D}\llbracket} #1 \source{\rrbracket}}
\newcommand{\inrtypenv}[2]{#1 \in \rtypenv{#2}}

\newcommand{\rtexp}[3]{#2 \target{\lesssim}_{#1} #3}
\newcommand{\rtval}[3]{#2 \target{\leq}_{#1} #3}
\newcommand{\rtenv}[2]{\target{\mathcal{G}{\llbracket}} #1 \target{\rrbracket}_{#2}}
\newcommand{\inrtenv}[4]{(#1,#2) \in \rtenv{#3}{#4}}

\newcommand{\setst}[3]{ \{ (#1, #2) ~|~ #3 \}}
\newcommand{\setsingl}[1]{ \{\}}
\newcommand{\setsto}[2]{ \{ #1 ~|~ #2 \}}

\newcommand\defeq{\mathrel{\overset{\makebox[0pt]{\mbox{\normalfont\tiny\sffamily def}}}{=}}}

\newcommand{\sforce}[1]{\source{\mathit{force}}(#1)}
\newcommand{\tforce}[1]{\target{\mathit{force}(#1)}}

\section{Experimental evaluation}
\label{sec:experiments}

We now present an experimental evaluation of \metafstar.
%
%
First, we provide benchmarks comparing our reflective canonicalizer
from \autoref{sec:canon_semiring} to calling the SMT solver directly
without any canonicalization.
%
%
Then, we return to the parsers and serializers from
\autoref{sec:minilowparse} and show
how, for VCs that arise, a domain-specific tactic is much more
tractable than a SMT-only proof.

\subsection{A Reflective Tactic for Partial Canonicalization}
\label{sec:canonicalization}

%
%

\newcommand{\canonfig}{
  \begin{wrapfigure}{r}{0.51\textwidth}
      \small
      \vspace{-8mm}
  \begin{tabular}{|c||c|c|c|c|}
  \hline
          & Rate     & Queries              & Tactic & Total \\
  \hline
   smt1x   & 0.5\%   & 0.216  $\pm$ 0.001   & --     & 2.937 \\
   smt2x   & 2\%     & 0.265  $\pm$ 0.003   & --     & 2.958 \\
   smt3x   & 4\%     & 0.304  $\pm$ 0.004   & --     & 3.022 \\
   smt6x   & 10\%    & 0.401  $\pm$ 0.008   & --     & 3.155 \\
   smt12x  & 12.5\%  & 0.596  $\pm$ 0.031   & --     & 3.321 \\
   smt25x  & 16.5\%  & 1.063  $\pm$ 0.079   & --     & 3.790 \\
   smt50x  & 22\%    & 2.319  $\pm$ 0.230   & --     & 5.030 \\
   smt100x & 24\%    & 5.831  $\pm$ 0.776   & --     & 8.550 \\
   interp  & 100\%   & 0.141  $\pm$ 0.001   & 1.156  & 4.003 \\
   native  & 100\%   & 0.139  $\pm$ 0.001   & 0.212  & 3.071 \\
  \hline
  \end{tabular}
      \vspace{-6mm}
  \end{wrapfigure}
}

In \autoref{sec:canon_semiring}, we have described the
\ls$canon_semiring$ tactic that rewrites semiring expressions into sums
of products.
%
%
%
%
%
%
We find that this tactic significantly improves proof robustness.
The table below compares the success rates and times for the
\ls$poly_multiply$ lemma from \autoref{sec:poly_multiply}.
To test the robustness of each alternative, we run the tests 200 times
while varying the SMT solver's random seed.
The \texttt{smt$i$x} rows represent asking the solver to prove the lemma
without any help from tactics, where $i$ represents the resource limit
(\texttt{rlimit}) multiplier given to the solver.
This \texttt{rlimit} is memory-allocation based and independent of the
particular system or current load.
For the \ls$interp$ and \ls$native$ rows, the
\ls$canon_semiring$ tactic is used, running it using \fstar's KAM normalizer and
as a native plugin respectively---both with an \texttt{rlimit}
of $1$.
For each setup, we display the success rate of verification, the average
(CPU) time taken for the SMT queries (not counting the time for parsing/processing
the theory)
with its standard deviation, and
the average total time (its standard deviation coincides with that of
the queries).
When applicable, the time for tactic execution (which is independent of
the seed) is displayed.
The \texttt{smt} rows show very poor success rates: even when upping the
\texttt{rlimit} to a whopping 100x, over three quarters of the attempts
fail.
\canonfig
Note how the (relative) standard deviation increases with the
\texttt{rlimit}: this is due to successful runs taking rather random
times, and failing ones exhausting their resources in similar times.
The setups using the tactic show a clear increase in
robustness: canonicalizing the assertion causes this proof to always
succeed, even at the default \texttt{rlimit}.
%
%
We recall that the
tactic variants still leave goals for SMT solving, namely, 
the skeleton for the original VC
and the canonicalized equality left by the tactic, easily dischargeable
by the SMT solver through much more well-behaved linear reasoning.
The last column shows that native compilation speeds up this tactic's
execution by about 5x.

\subsection{Combining SMT and Tactics for the Parser Generator}
\label{sec:eval-minilowparse}


In \autoref{sec:minilowparse}, we presented a library of combinators
and a metaprogramming approach to automate the construction of
verified, mutually inverse, low-level parsers and serializers from type
descriptions.
%
%
Beyond generating the code, tactics are used to process and discharge
proof obligations that arise when using the combinators.

We present three strategies for discharging these obligations,
including those of bijectivity that arise when constructing parsers
and serializers for enumerated types. First, we used \fstar's default
strategy to present all of these proofs directly to the SMT
solver. Second, we programmed a $\sim$100 line tactic to discharge
these proofs without relying on the SMT solver at all. Finally, we
used a hybrid approach where a simple, 5-line tactic is used to prune
the context of the proof removing redundant facts before presenting
the resulting goals to the SMT solver.

\begin{wrapfigure}{r}{0.42\textwidth}
    \small
    \vspace{-7mm}
    \begin{tabular}{|c||c|c|c|}
        \hline
                       Size           & SMT only & Tactic only & Hybrid \\
        \hline
                         4            & 178      & 17.3        & 6.6    \\
                         7            & 468      & 38.3        & 9.8    \\
                        10            & 690      & 63.0        & 19.4   \\ \hline
    \end{tabular}
    \vspace{-8mm}
\end{wrapfigure}

The table alongside shows the total time in seconds for verifying
metaprogrammed low-level parsers and serializers for enumerations of
different sizes.
In short, the hybrid approach scales the best; the
tactic-only approach is somewhat slower; while the SMT-only approach
scales poorly and is an order of magnitude slower. Our hybrid approach
is very simple. With some more work, a more sophisticated hybrid
strategy could be more performant still, relying on tactic-based
normalization proofs for fragments of the VC best handled
computationally (where the SMT solver spends most of its time), while
using SMT only for integer arithmetic, congruence closure
etc. However, with \metafstar's ability to manipulate proof contexts
programmatically, our simple context-pruning tactic provides a big
payoff at a small cost.

\section{Related Work}
\label{sec:related}

\ch{Should also cite this very recent F-IDE 2018 paper
\url{https://arxiv.org/abs/1811.10814}}

\ch{Another tool we should probably compare with is KeY, which also
  has a notion of interactive proofs
  \url{https://pdfs.semanticscholar.org/2f5d/6602fbaa97fef0a9df1216451e536c33d617.pdf}.
  With a lot of work going into a usable GUI for proofs:
  \url{https://ieeexplore.ieee.org/document/7582776}.}



Many SMT-based program verifiers~\cite{BarnettCDJL05,
BurdyCCEKLLP05, BarnettDFJLSV05, FlanaganLLNSS13, LeinoN98},
%
%
rely on user hints, in the form of assertions and lemmas, to complete proofs.
This is the predominant style of proving
used in tools like Dafny~\cite{leino10dafny}, Liquid Haskell~\cite{VazouTCSNWJ18}, Why3~\cite{filliatre13why3}, and \fstar
itself~\cite{mumon}.
%
%
%
However, there is a growing trend to augment this style of semi-automated proof with interactive proofs.
For example, systems like Why3~\cite{filliatre13why3} allow
VCs to be discharged
using ITPs such as Coq, Isabelle/HOL, and PVS, but this requires
an additional embedding of VCs into the logic of the ITP in question.
In recent concurrent work, support for {\em effectful} reflection
proofs was added to Why3~\cite{why3reflection}, and it would be
interesting to investigate if this could also be done in \metafstar.
\citet{GrovT16} present Tacny, a tactic framework for Dafny,
which is, however, limited in that it only transforms source code,
with the program verifier unchanged.
%
In contrast, \metafstar{} combines the benefits of an SMT-based
program verifier and those of tactic proofs within a single language.

Moving away from SMT-based verifiers, ITPs have long
relied on separate languages for proof scripting,
starting with Edinburgh
LCF~\cite{gordon-edinburghLCF:1979} and ML,
%
and continuing with HOL, Isabelle and Coq, which are either extensible
via ML, or have dedicated tactic languages~\cite{isar, Delahaye00,
  AnandBCST18, StampoulisS10}.
%
%
%
%
%
%
\metafstar builds instead on a recent idea in the space of dependently
typed ITPs \cite{ZilianiDKNV15, Mtac2, ChristiansenB16,
  EbnerURAM17} of reusing the 
object-language as the meta-language.
This idea first appeared in Mtac, a Coq-based tactics framework
for Coq%
~\cite{ZilianiDKNV15, Mtac2},
%
and has many generic benefits including reusing the standard library,
IDE support, and type checker of the proof assistant.
Mtac can additionally check the partial
correctness of tactics, 
which is also sometimes possible in \metafstar but still rather limited
(\autoref{sec:specifying-metaprograms}).
%
%
\metafstar's design is instead more closely inspired by the
metaprogramming frameworks of Idris~\cite{ChristiansenB16} and
Lean~\cite{EbnerURAM17}, which provide a deep embedding of terms that
metaprograms can inspect and construct at will without dependent types
getting in the way.
However, \fstar's effects, its weakest precondition calculus, and its
use of SMT solvers distinguish \metafstar from these other frameworks,
presenting both challenges and opportunities, as discussed in this
paper.


Some SMT solvers also include tactic engines~\cite{smttactics},
which allow to process queries in custom ways.
However, using SMT tactics from a program verifier is not very
practical.
To do so effectively, users must become familiar not only with the
solver's language and tactic engine, but also with the translation from
the program verifier to the solver.
Instead, in \metafstar, everything happens within a single language.
Also, to our knowledge, these tactics are usually coarsely-grained,
and we do not expect them to enable developments such as
\autoref{sec:seplogic}.
Plus, SMT tactics do not enable metaprogramming.


Finally, ITPs are seeing increasing use of ``hammers'' such as
Sledgehammer~\cite{BlanchetteBP13, PaulsonB10, Blanchette013} in
Isabelle/HOL, and similar tools for HOL Light and
HOL4~\cite{KaliszykU14}, and Mizar~\cite{KaliszykU15a}, to interface
with ATPs.
This technique is similar to \metafstar, which, given its support for
a dependently typed logic is especially related to a recent hammer for
Coq~\cite{CzajkaK17}. Unlike these hammers, \metafstar does not aim to
reconstruct SMT proofs, gaining efficiency at the cost of trusting the
SMT solver. Further, whereas hammers run in the background, lightening
the load on a user otherwise tasked with completing the entire
proof, \metafstar relies more heavily on the SMT solver as an end-game
tactic in nearly all proofs.




\section{Conclusions}
\label{sec:conclusions}

A key challenge in program verification is to balance
automation and expressiveness. Whereas tactic-based ITPs
support highly expressive logics, the tactic author is responsible for all
the automation.
Conversely, SMT-based program verifiers
provide good, scalable automation for comparatively weaker logics, but offer
little recourse when verification fails. A design that allows picking the
right tool, at the granularity of each verification sub-task, is a
worthy area of research.
\metafstar presents a new point in this space: by using hand-written
tactics alongside SMT-automation, we have written
proofs that were previously impractical in \fstar, and
(to the best of our
knowledge) in other SMT-based program verifiers.


\iffull
This flurry of new use-cases is backed by an efficient native
compilation scheme.
%
Natively evaluating metaprograms allows to dynamically extend
\fstar with type-safe, semantically correct (by the safety of
\ls$TAC$), user-defined custom behavior without a performance
penalty---witnessing the close interoperability between \fstar and its
metalanguage.
\fi

\newcommand\grantsponsor[3]{#2}
\newcommand\grantnum[2]{#1-#2}

\ifcamera
\paragraph{Acknowledgements}
  We thank Leonardo de Moura
  and the Project Everest team for many useful discussions.
  The work of
    Guido Mart\'inez,
    Nick Giannarakis,
    Monal Narasimhamurthy,
  and Zoe Paraskevopoulou
  was done, in part, while interning at Microsoft Research.
  Cl\'ement Pit-Claudel's work was in part done
  during an internship at Inria Paris.
  The work of Danel Ahman, Victor Dumitrescu, and C\u{a}t\u{a}lin Hri\c{t}cu
  is supported by the MSR-Inria Joint Centre and the
  \grantsponsor{1}{European Research Council}{https://erc.europa.eu/}
  under ERC Starting Grant SECOMP (\grantnum{1}{715753}).
\fi


\iffull
\newpage
\appendix
\begin{figure}
\begin{lstlisting}
Lemma poly_multiply
  (n p r h r0 r1 h0 h1 h2 s1 d0 d1 d2 hh : Z)
  (H: p > 0 /\ r1 >= 0 /\ n > 0 /\ 4 * (n * n) = p + 5 /\ r = r1 * n + r0 /\
            h = h2 * (n * n) + h1 * n + h0 /\ s1 = r1 + (r1 / 4) /\ r1 mod 4 = 0 /\
            d0 = h0 * r0 + h1 * s1 /\ d1 = h0 * r1 + h1 * r0 + h2 * s1 /\
            d2 = h2 * r0 /\ hh = d2 * (n * n) + d1 * n + d0)
  : ((h * r) mod p = hh mod p).
Proof.
  repeat match goal with H : ?A /\ ?B |- _ => destruct H end.
  subst.
  match goal with |- ?A mod p = ?B mod p => change (eqm p A B) end.
  match goal with K : r1 mod 4 = 0 |- _ =>
   apply Z_div_exact_full_2 in K; try omega;
   revert K;
   generalize (r1 / 4);
   intro u;
   intros; subst
  end.
  pose (b := (h2 * n + h1) * u).
  generalize (Zopp_eqm p).
  generalize (Zplus_eqm p).
  generalize (Zminus_eqm p).
  generalize (Zmult_eqm p).
  generalize (eqm_setoid p).
  set (e := eqm p).
  intros.
  match goal with |- e _ ?R =>
    generalize (modulo_addition_lemma R p b)
  end.
  fold e. intro K. setoid_rewrite <- K. clear K.
  assert (p = 4 * (n * n) - 5) as J by omega.
  replace (b * p) with (b * (4 * (n * n) - 5)) by congruence.
  unfold b.
  apply eq_implies_eqm.
  ring.
Qed.
\end{lstlisting}
\caption{Coq proof of \ls$poly_multiply$}
\label{fig:coqpolymultiply}
\end{figure}

\section{Metaprogramming verified parsers and serializers}
\label{sec:parsers-detail}

We consider parser and serializer specifications as ghost code that
``operates'' on finite sequences of bytes of unbounded lengths.
For a given type $t$, a serializer for $t$ marshals any data of type
$t$ into a finite sequence of bytes; and a parser for $t$ reads from a
sequence of bytes, checks whether it corresponds to a valid data of
type $t$, and if so, returns that parsed data and the number of bytes
consumed.

The serializer specification always succeeds; the parser specification
succeeds if and only if the input data is valid with respect to the
parser. On those specifications, we aim to prove that the parser and
the serializer are partial inverses of each other. We encode these
requirements as refinements on the types of parser and serializer
specifications:
\begin{lstlisting}
type byte = FStar.UInt8.t
type parser t = (input: seq byte) -> GTot (option (t * (x: nat { x < length input } )))
let serialize_then_parse_eq_id #t (p: parser t) (s: (t -> GTot (seq byte))) =
  (forall (x: t) . let y = s x in p y == Some (x, length y))
let parse_then_serialize_eq_id #t (p: parser t) (s: (t -> GTot (seq byte))) =
  (forall (x: seq byte) . match p x with | Some (y, len) -> s y == slice x 0 len | _ -> True) })
type serializer #t (p: parser t) = (s: (t -> GTot (seq byte))
  { serialize_then_parse_eq_id p s /\ parse_then_serialize_eq_id p s })
\end{lstlisting}

Whereas those specifications are ghost code, we would like to generate
implementations that can be extracted to C. To this end, we generate
\emph{stateful} implementations written in the \lowstar set of \fstar~\cite{lowstar},
operating on buffers, which are \lowstar mutable data structures
representing C arrays. There, instead of a sequence of bytes, the
parser implementation is given an input buffer and its length; and the
serializer implementation is given an output buffer (along with its
length) onto which it is to serialize the data. This means that, given
a piece of data to serialize and a destination buffer, a serializer
implementation can succeed only if the serialized data fits into the
destination buffer; if so, then the serializer will return the number
of bytes written.

We bake the correctness of the implementations with respect to their
specifications at the level of their types:

\begin{lstlisting}
type buffer8 = LowStar.Buffer.buffer FStar.UInt8.t
type u32 = FStar.UInt32.t

let parser32_postcond #t (p: parser t) (input: buffer8) (l: u32 {l == len input})
  (h: mem) (res: option (t * u32)) (h': mem) =
  modifies loc_none h h' /\ (* memory safety *)
  match p (as_seq h input), res with (* functional correctness *)
  | Some (x, ln), Some (x', ln') -> x' == x /\ FStar.UInt32.v ln' == ln
  | None, None -> True   | _ -> False

type parser32 #t (p: parser t) = (input: buffer8) -> (l: u32 { l == len input } )
                                              -> Stack (option (t * u32))
  (requires (fun h -> live h input)) (ensures (parser32_postcond p input))

let serializer32_postcond #t (p: parser t) (s: serializer p) (output: buffer8)
  (l: u32 {l == len output}) (x: t) (h: mem) (res: option u32) (h': mem) =
  live h output /\ live h' output /\ modifies (loc_buffer b) h h' /\ (* memory safety *)
  let ln = length (s x) in
  match res with (* functional correctness *)
  | Some ln' ->
    FStar.UInt32.v ln' == ln /\ ln <= FStar.UInt32.v l /\
    let b' = sub b 0ul ln' in modifies (loc_buffer b') h h' /\ as_seq h' b' == s x
  | None -> ln > FStar.UInt32.v l

type serializer32 #t (#p: parser t) (s: serializer p) =
  (output: buffer8) -> (l: u32 { l == len output } ) -> (x: t) -> Stack (option u32)
  (requires (fun h -> live h output)) (ensures (serializer32_postcond s output l x))
\end{lstlisting}

\tahina{TODO: gen\_parser can be implemented using typeclasses, triggering gen\_enum\_parser.}

\paragraph{\bf Generating the specification from a \fstar datatype}

Instead of mandating the user to write the parser and serializer
specification themselves, we write a \metafstar
tactic, \verb+gen_specs+, to generate both
the parser and the serializer specifications directly from the \fstar
type $t$ given by the user. These tactics operate by syntax inspection
on $t$ itself, and generates the parser and serializer specifications
using our combinators.

Contrary to Coq's Ltac, \metafstar tactics can also inspect
definitions of inductive types. From there, we made \verb+gen_specs+
also generate a parser and serializer specification out of an
enumeration type defined as a \fstar inductive type whose constructors
have no arguments: for such a type with $n \leq 256$ constructors, the
corresponding parser will associate a unique 8-bit integer from $0$ to
$n-1$ to each constructor. For instance, if $t$ is defined as a \fstar
enumeration type:

As before, by virtue of the correctness of serializers being baked in
their type, the correctness of the serializer generated
by \verb+gen_specs+ does not appear as such as a verification
condition when \fstar is type-checking the generated term. In the
example above, the only proof obligations generated are those to
correctly type-check the rewriting functions themselves passed
to \verb+serialize_synth+ (e.g. enum constructors are rewritten to
integers less than 4), and the precondition of \verb+serialize_synth+
(i.e. the fact that the two rewriting functions are inverse of each
other.) These can be discharged by the SMT solver, but they can also
be discharged automatically by our tactics directly by case analysis
on the inductive type, or by bounded integer enumeration.

\paragraph{\bf Binders}

For instance, we specified \verb+seq_p+, a combinator for dependent
parsing:\protect{\footnote{This parsing combinator is based on yet
unpublished work by Tej Chajed. Building serializers for parsers
derived from \ls$seq_p$ is explicitly out of the scope of this
paper.}} \tahina{I mentioned Tej in the footnote. This part of the
footnote should be removed if Tej becomes a coauthor of the paper.}

\begin{lstlisting}
let seq_p #t1 (p1: parser t1) #t2 (p2: (t1 -> Tot (parser t2))) : Tot (parser t2) =
  fun input -> match p1 input with | None -> None | Some (x1, cons1) ->
    let input2 = slice input cons1 (length input) in
    match p2 x1 input2 with
    | None -> None | Some (x2, cons2) -> Some (x2, cons1 + cons2)
\end{lstlisting}

Then, we implemented it in \lowstar (\verb+seq_pi+), and
manually proved it correct. From there, the user can write a simple
parser specification for a parser that parses an integer encoded
in one or two bytes depending on its value (where \verb+parse_ret x+
is a parser that returns \verb+x+ without reading its byte input:)

\begin{lstlisting}
let example : parser_spec FStar.Int16.t =
  seq_p parse_u8
           (fun lo -> if lo <^ 128uy then parse_ret (cast_u8_to_i16 lo)
                   else parse_synth parse_u8 (fun hi -> cast_u8_to_i16 (lo %^ 128uy)
                                                 +^ cast_u8_to_i16 hi))
\end{lstlisting}
Then, if the user writes:
\begin{lstlisting}
let example_impl : parser_impl example = _ by gen_parser_impl
\end{lstlisting}
then \verb+gen_parser32+ inspects the shape of the \emph{goal},
which is \verb+parser32 example+, and so generates the following \lowstar implementation:
\begin{lstlisting}
seq_pi parse32_u8 (fun lo ->
  parse32_ifthenelse (lo <^ 128uy) (fun _ -> parse32_ret (cast_u8_to_i16 lo)) (fun _ -> parse32_synth parse32_u8 (fun hi -> cast_u8_to_i16 (lo %^ 128uy) +^ cast_u8_to_i16 hi)))
\end{lstlisting}

from where KreMLin \cite{lowstar} then inlines the corresponding implementation combinators to produce C code.

First, \verb+gen_parser32+ inspects the shape of the goal to determine
the type of the data to be parsed and the parser specification; then,
it calls a recursive tactic \verb+gen_parser32'+ that inspects the
shape of the parser specification and actually builds the
implementation.

\begin{lstlisting}
module T = FStar.Tactics
let rec gen_parser32' (env: T.env) (t: T.term) (p: T.term) : T.Tac T.term =
  let (hd, tl) = app_head_tail p in
  if hd `T.term_eq` (`(parse_ret)) then T.mk_app (`(parse32_ret)) tl else
  if hd `T.term_eq` (`(parse_u8)) then (`(parse32_u8)) else
  if hd `T.term_eq` (`(seq_p)) then match tl with
  | [(t, _); (p, _); (t', _); (p', _)] ->
    begin match T.inspect p' with
    | T.Tv_Abs bx body ->
      let p32 = gen_parser32' env k t p in
      let env' = T.push_binder env bx in
      let body' = gen_parser32' env' k' t' body in
      let p32' = T.pack (T.Tv_Abs bx body') in
      (`(seq_pi #(`#t) #(`#p) (`#p32) (`#t') (`#p') (`#p32')))
    | _ -> ...
    end
  | _ -> ...
  else
  ...

let gen_parser32 () : T.Tac unit =
  let (hd, tl) = app_head_tail (T.cur_goal ()) in
  if hd `T.term_eq` (`parser32)
  then match tl with
  | [(t, _); (p, _)] ->
    let env = T.cur_env () in
    let p32 = gen_parser32' env t p in
    T.exact_guard p32;
  | _ -> ...
\end{lstlisting}

When \verb+gen_parser32+ (resp. \verb+gen_serializer32+) builds the
parser (resp. serializer) implementation, it may generate verification
conditions due to specific preconditions required by certain
combinators (such as the precondition of the \verb+serialize_synth+
rewriting serializer combinator: the rewriting functions being inverse
of each other). Contrary to Coq, there are no proof objects, so those
preconditions need to be proven again even if they had been proven
earlier by the user at the specification level. Nevertheless, those
preconditions can be automatically solved by our tactics, or they can
directly be sent to the SMT solver.

However, by virtue of the correctness of the implementation
combinators being baked in their type, typechecking the resulting term
generated by \verb+gen_parser32+ triggers no verification condition
other than those preconditions required by the specific combinators,
and unification constraints, the latter solved by reflexivity. In
particular, the memory safety and correctness of the implementations
generated by \verb+gen_parser32+ do not appear as such as verification
conditions when \fstar is type-checking the terms generated
by \verb+gen_parser32+. \tahina{Is it already important to mention
here the kinds of proof obligations generated by the tactics, and that
they can be solved either by SMT or tactics? Or should we move to
section 5?}

\subsection{Verified meta-programming of program transformations}

\ls$compile_bind$ expects, in \ls$bind f1 f2$, that \ls$f2$ be an
abstraction, and so needs to manipulate binders and recursively
compile \ls$f1$ and the body of \ls$f2$. This will, in the latter
case, make some bound variables appear free in terms being compiled.

If the head term $t_0$ is a free variable not corresponding to any
combinator, then \ls$compile_fvar$ unfolds it, compiles the unfolded
term, and inserts a coercion.\footnote{We defined an explicit coercion
combinator, \ls$coerce_sz$, such that \ls$coerce_sz f1 f1_sz f2$ is
well-typed and is a \ls$m_sz f2$ as soon as \ls$f1 () == f2 ()$ as $m$
specifications.} In most cases, a free variable to be unfolded will
correspond to a function definition, so that unfolding will yield a
$\beta$-redex, which \ls$compile_fvar$ reduces by calling a
normalization tactic that we provide as part of \metafstar. This needs
an \emph{environment}
\ls{(e: env)} tracking the bound variables encountered by nested calls
to \ls$compile_bind$.

We show below the implementation of \ls$compile_bind$, which is
representative of most features of \metafstar that we use for
our \ls$compile$ tactic.

\begin{lstlisting}
 (* extract the head symbol and the arguments of an application *)
let rec app_head_rev_tail (t: term) : Tac (term * list argv) = 
  match inspect t with | Tv_App u v -> let (x, l) = app_head_rev_tail u in (x, v :: l) | _ -> (t, [])
let app_head_tail (t: term) : Tac (term * list argv) = let (x, l) = app_head_rev_tail t in (x, rev l)

let compile_bind (e: env) (ty: term) (t: term) (compile: env -> term -> term -> Tac term) : Tac term =
  let (t_0, ar) = app_head_tail t in
  guard (t_0 = quote bind); (* fail if false *)
  match ar with
  | [ ($\tau_1$, _); ($\tau_2$, _); (f1, _); (f2, _); ] ->
    begin match inspect f2 with
    | Tv_Abs v f2_body ->
      let f1' = compile e $\tau_1$ f1 in
      let e' = push_binder e v in
      let f2_body' = compile e' $\tau_2$ f2_body in
      let f2' = pack (Tv_Abs v f2_body') in
      (`(bind_sz #(`#tau_1) #(`#tau_2) (`#f1) (`#f1') (`#f2) (`#f2')))
    | _ -> fail "compile_bind: Not an abstraction"
    end
  | _ -> fail "compile_bind: 4 arguments expected"
\end{lstlisting}

\paragraph{\bf Putting everything together}
Finally, the following wrapper combinator $\Phi$ puts everything
together, taking a specification $f$ and its size and buffer
implementations, and yielding a stateful function, fitting in
the \lowstar subset of \fstar amenable to extraction to C, that first
computes the expected size of the output, then allocates one single
buffer using the C \ls$malloc$, then uses it to write the output of
$f$:
\begin{lstlisting}
let $\Phi$ $\tau$ (f: m $\tau$) (f_sz: m_sz f) (f_st: m_st f) : ST (option ($\tau$ * buffer U8.t)) (requires (fun _ -> True))
  (ensures (fun h res h' -> let (r, log) = f () in
    if length log > $2^{31} - 1$ then res == None else
    Some? res /\ (let (Some (r', b)) = res in r == b /\ fresh b h h' /\  as_seq h b == log)
  )) = match f_sz with | None -> None | Some (_, sz') ->
       let b = rcreate_mm root 42uy sz in (* then allocate a fresh buffer *)
       let (r, _) = f_st b in (* then write the output into the buffer *)
       Some (r, b)
\end{lstlisting}

\paragraph{\bf Related Work}
Our approach of generating verified low-level formatters is related to
Amin and Rompf's~\cite{Amin:2017:LAW:3009837.3009867} work on
LMS-Verify. They use metaprogramming facilities in Scala to generate C
code together with proof annotations to be checked by
Frama-C~\citep{CuoqKKPSY12}, an SMT-based C verifier. In contrast, we
generate correct-by-construction imperative \lowstar code, which is
verified by \fstar{} before being translated to C by the KreMLin
compiler.  LMS-Verify has also been used to generate efficient and
safe HTTP parsers, a larger-scale effort than the network message
formatters than we have currently done.


\newpage

\section{Proof of \autoref{thm:split}}
\label{app:correctness}
This proof is based on the formal definition of \emf, a recent formalization
of an \fstar{} subset~\cite{dm4free}.
We use the same notation and rule names from there, but the proof is
nevertheless quite direct with just minimum familiarity with \emf.
We use $E$ to represent \fstar contexts:

\[
\begin{array}{rclcl}
    E &=&    \cdot  & & \\
      & \mid & t E &                 \mid & E t \\
      & \mid & \lambda(x:t). E &     \mid & \lambda(x:E). t \\
      & \mid & (x:t) \rightarrow E & \mid & (x:E) \rightarrow t \\
      & \mid & x:t\{E\} &            \mid & x:E\{t\} \\
      & ~ & \ldots
\end{array}
\]

Contexts present a set of bound variables to their hole. We denote
those variables by $\gamma(E)$:

\[
\begin{array}{lcl}
    \gamma(\cdot) &=& \emptyset \\
    \gamma(\lambda(x:t). E) &=& \{x\} \cup \gamma(E) \\
    \gamma(\lambda(x:E). t) &=& \gamma(E) \\
    \ldots
\end{array}
\]

We say a context $E$ has type $t_1 \Rightarrow t_2$ for $\Gamma$ (noted
by $\Gamma \vdash E : t_1 \Rightarrow t_2$) when for all $e$ such that
$\Gamma, \gamma(E) \vdash e : t_1$ we also have $\Gamma \vdash E[e] : t_2$.

\subsection{Soundness of splitting the proof obligation}

The proof follows mainly from the following lemma:

\begin{lemma}
    \[
    \infer{\Gamma \vDash E[t_1] = E[t_2]}
          {\Gamma, \gamma(E) \vDash t_1 = t_2}
    \]
\end{lemma}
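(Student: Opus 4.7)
The plan is to prove the lemma by structural induction on the context $E$. In the base case $E = \cdot$, we have $E[t_1] = t_1$, $E[t_2] = t_2$, and $\gamma(E) = \emptyset$, so the hypothesis $\Gamma \vDash t_1 = t_2$ is literally the conclusion.

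For the inductive step, I would perform a case analysis on the outermost former of $E$. In the non-binding cases (such as $E = s\,E'$, $E = E'\,s$, or $E = x{:}E'\{s\}$ where the hole is inside the refined type but no fresh binder of $E$ is traversed), we have $\gamma(E) = \gamma(E')$; the induction hypothesis applied to $E'$ yields $\Gamma \vDash E'[t_1] = E'[t_2]$, and we close the goal by appealing to the congruence rule of \emf's validity judgment for the relevant constructor (application, refinement, etc.). In the binding cases (such as $E = \lambda(x{:}s).\,E'$, $E = (x{:}s)\rightarrow E'$, or $E = x{:}s\{E'\}$ where the hole lies under the refinement binder), we have $\gamma(E) = \{x\} \cup \gamma(E')$. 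The hypothesis then reads $\Gamma, x{:}s, \gamma(E') \vDash t_1 = t_2$, so I apply the induction hypothesis with the enlarged ambient context $\Gamma, x{:}s$ to derive $\Gamma, x{:}s \vDash E'[t_1] = E'[t_2]$, and then appeal to the corresponding congruence rule for abstraction, dependent arrow, or refinement.

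The main obstacle is essentially administrative: I need to confirm that \emf's validity judgment actually supplies (or admits) congruence-for-equality under every term constructor that the grammar of $E$ enumerates, and to handle the binding cases in a way consistent with \fstar's locally-nameless convention. Since validity in \emf is a classical, proof-irrelevant higher-order logic with substitutivity of equality, each congruence step is standard, so no case of the induction calls for genuinely new reasoning beyond carefully threading $\gamma(E)$ through the binders as the context is traversed.
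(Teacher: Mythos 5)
Your proof is viable, but it takes a genuinely different route from the paper's. The paper does not induct on $E$ at all: it defines a single abstraction $S = \lambda f.\, E[f\,(\gamma(E))]$, derives $\Gamma \vDash \forall \gamma(E).\, t_1 = t_2$ from the hypothesis by $\forall$-introduction (V-$\forall$i), converts that into an equality of functions $(\lambda \gamma(E).\, t_1) = (\lambda \gamma(E).\, t_2)$ by functional extensionality (V-Ext), applies congruence of application once (V-EqP) to obtain $\Gamma \vDash S\,(\lambda \gamma(E).\, t_1) = S\,(\lambda \gamma(E).\, t_2)$, and finishes by reduction, which collapses the two sides to $E[t_1]$ and $E[t_2]$. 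The point, which the paper states explicitly, is that this argument is uniform in $E$: it never inspects the grammar of contexts and needs only three validity rules plus beta-reduction. Your induction, by contrast, stands or falls on the obstacle you yourself flag as ``administrative'': \emf's validity judgment does not come with a primitive congruence rule for every term constructor, and in particular the $\xi$-style rules you invoke in the binding cases --- concluding $\Gamma \vDash \lambda(x{:}s).\,e_1 = \lambda(x{:}s).\,e_2$ from $\Gamma, x{:}s \vDash e_1 = e_2$, and similarly for dependent arrows and refinements --- are not given. They are derivable, but only via extensionality, $\forall$-introduction, application congruence, and reduction, i.e., by a local replay of exactly the paper's trick. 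So your induction can be completed, but each binding case secretly contains the whole of the paper's proof, and the induction itself buys little beyond making the argument sensitive to the context grammar: any extension of the syntax of $E$ forces new cases on your side, whereas the paper's proof is unaffected.
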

\begin{proof}
    The proof follows from applying functional extensionality and using
    a carefully crafted abstraction.

    Let $S = \lambda f. E[f (\gamma(E))]$ (where $f (\gamma(E))$
    represents $f$ applied to every variable in $\gamma(E)$,
    sequentially). Also, $\lambda \gamma(E). t$ represents
    abstracting $E$ over the variables in $\gamma(E)$, and
    similarly for $\forall \gamma(E). t$.

    Then,
    \[
        \infer[\mbox{Reduction}]{\Gamma \vDash E[t_1] = E[t_2]}
        {\infer[\mbox{V-EqP}]{\Gamma \vDash S (\lambda \gamma(E). t_1) = S (\lambda \gamma(E). t_2)}
        {\infer[\mbox{V-Ext}]{\Gamma \vDash (\lambda \gamma(E). t_1) = (\lambda \gamma(E). t_2)}
        {\infer[\mbox{V-$\forall$i}]{\Gamma \vDash \forall \gamma(E). t_1 = t_2}
                             {\Gamma, \gamma(E) \vDash t_1 = t_2}}}}
    \]

    Note that the particular set of contexts E does not influence
    the proof.

\end{proof}

Having such lemma, the theorem is proven as follows:
\[
    \infer{\Gamma \vDash E[\phi]}
         {\Gamma \vDash E[\top]
         & \infer[\mbox{Lemma}]{\Gamma \vDash E[\phi] = E[\top]}{
         \infer[\mbox{PropExt}]{\Gamma, \gamma(E) \vDash \phi = \top}{
         \infer[\mbox{Trivial}]{\Gamma, \gamma(E) \vDash \phi \iff \top}
                               {\Gamma, \gamma(E) \vDash \phi}}}
         }
\]

\subsection{Partial completeness of splitting the proof obligation}

While the previous theorem allows to soundly split any
subformulae within a VC, we have seen that some of them
constraint the system.
Here we prove that for a particular set of contexts, splitting
does not make the judgments stronger, and via Theorem 1, then they are
equivalent.

We define a particular shape of ``positive'' contexts $P$.
We don't attempt to follow all of EMF*'s syntax, just the parts
we commonly see as VCs in practice.

\[
\begin{array}{rclcl}
    P &=&    \cdot  & & \\
       & \mid & \phi \land P            &\mid& P \land \phi \\
       & \mid & \forall (x:t). P \\
\end{array}
\]

In \emf, an implication $a \Rightarrow b$ is just sugar for
$\forall (\_:a). b$, so they are considered as well.


\begin{theorem}
    Take $\Gamma$, $E$, and $\phi$ such that:
    (1) $\phi$ is squashed,
    (2) $\Gamma \vdash E : prop \rightarrow prop$
    and (3) $\Gamma \vdash \phi : prop$.
    Then the following holds

    \[
        \infer{\Gamma, \gamma(P) \vDash \phi \qquad \Gamma \vDash P[\top]}
              {\Gamma \vDash P[\phi]}
    \]
\end{theorem}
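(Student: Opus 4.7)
The plan is to prove this by structural induction on the positive context $P$, establishing both conjuncts of the conclusion simultaneously. The base case ($P = \cdot$) is immediate: $P[\phi] = \phi$ gives $\Gamma \vDash \phi$ directly, and since $\gamma(\cdot) = \emptyset$ this is the first conjunct, while $P[\top] = \top$ is trivially valid.

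For the conjunction cases $P = \psi \land P'$ and $P = P' \land \psi$, I would apply conjunction elimination of EMF*'s validity judgment to the hypothesis $\Gamma \vDash P[\phi]$ to obtain both $\Gamma \vDash \psi$ and $\Gamma \vDash P'[\phi]$. The induction hypothesis applied to $P'$ then yields $\Gamma, \gamma(P') \vDash \phi$ and $\Gamma \vDash P'[\top]$. Since $\gamma(P) = \gamma(P')$ in both cases (conjunction introduces no binders), the first conclusion follows immediately, and the second conclusion $\Gamma \vDash \psi \land P'[\top]$ (or the symmetric form) follows by conjunction introduction from $\Gamma \vDash \psi$ and the induction hypothesis.

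For the quantifier case $P = \forall (x{:}t).\, P'$, I would apply the universal-elimination rule (V-$\forall$e) to $\Gamma \vDash \forall (x{:}t).\, P'[\phi]$ to obtain $\Gamma, x{:}t \vDash P'[\phi]$; the induction hypothesis applied to $P'$ in this extended context then gives $\Gamma, x{:}t, \gamma(P') \vDash \phi$, which is exactly $\Gamma, \gamma(P) \vDash \phi$ since $\gamma(P) = \{x{:}t\} \cup \gamma(P')$. For the second conjunct, $\Gamma, x{:}t \vDash P'[\top]$ gives $\Gamma \vDash \forall(x{:}t).\,P'[\top] = P[\top]$ by universal introduction. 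The implication case $\psi \Rightarrow P'$ is handled uniformly as the desugaring $\forall (\_{:}\psi).\, P'$, so the assumption $\psi$ simply enters $\gamma(P)$ as an anonymous binding.

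The main obstacle I anticipate is the quantifier case: I must make sure the binder is correctly threaded through and that the induction hypothesis can be applied on $P'$ in the extended context $\Gamma, x{:}t$. This requires that $P'$ remains a positive context (immediate from the grammar of $P$) and that the typing of $\phi$ is preserved under weakening into $\Gamma, x{:}t$, which holds in EMF* by the standard weakening lemma. Proof irrelevance, which is available since $\phi$ is squashed, removes any complication regarding the semantic content of the extracted ``witness''; only derivability matters. Together with the soundness direction proved in \autoref{thm:split}, this establishes that splitting a strictly-positive sub-assertion neither strengthens nor weakens the original obligation.
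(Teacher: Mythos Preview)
Your proposal is correct and follows essentially the same argument as the paper's own proof: a structural induction on the positive context $P$, with the base case trivial, the conjunction cases handled by $\land$-elimination/introduction together with $\gamma(P)=\gamma(P')$, and the quantifier case handled by $\forall$-elimination, applying the induction hypothesis in the extended environment $\Gamma, x{:}t$, and then $\forall$-introduction (V-$\forall$i) to recover $P[\top]$. Your treatment of implication as the desugared $\forall$ and your remark about weakening for $\phi$ in the extended context are exactly the points the paper relies on implicitly.
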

\begin{proof}
    By induction on the structure of $P$, keeping $\Gamma$ and $\phi$
    universally quantified.
    \begin{itemize}
        \item For $P = \cdot$, trivial
        \item For $P = \psi \land P'$, our hypothesis is
              $\Gamma \vDash \psi \land P'[\phi]$.
              Hence,
              $\Gamma \vDash \psi$ and
              $\Gamma \vDash P'[\phi]$.
              By the IH, we get
              $\Gamma, \gamma(P') \vDash \phi$
              (note that $\gamma(P) = \gamma(P')$)
              and
              $\Gamma \vDash P'[\top]$.
              By weakening and conjunction, 
              we can prove $\Gamma, \gamma(P') \vDash \psi \land \phi$
              and $\Gamma \vDash \psi \land P'[\top]$, and conclude.

        \item For $P = P' \land \psi$, the reasoning is analogous.

        \item For $P = \forall (x:t). P'$, our hypothesis is
              $\Gamma \vDash \forall (x:t). P'[\phi]$. Hence,
              we get $\Gamma, x:t \vDash P'[\phi]$ by
              eliminating the quantifier.
              From the IH, we get
              $\Gamma, x:t, \gamma(P') \vDash \phi$
              and
              $\Gamma, x:t \vDash E[\top]$.
              From this last one, we can use V-$\forall$i to
              obtain $\Gamma \vDash \forall (x:t). E[\top]$, and conclude
              (noting that $\gamma(P) = x:t, \gamma(P')$).
    \end{itemize}
\end{proof}

\newpage

\section{Modelling native plugins}
\label{app:native}

This appendix contains the formal definitions of our multi-language
interoperability model. We formalize the semantics of source and target language
as well as the translation between the two. We convey the essential ideas 
in the text below, and present full details in 
Figures \ref{fig:sourcelanglong}--\ref{fig:translationlong} that follow the discussion below.

\subsection*{Part 1: Modeling Native Plugins with Simple Types}

Reflecting our requirement that plugins are only supported at
ML-typeable interfaces, we start with a
source language that is an intrinsically typed (Church
style), standard, simply typed lambda calculus with pairs and
sums. Later in the section, we
will add ML-style rank 1 polymorphism to it.
Conversely, reflecting
the type-less native representation of compiled OCaml code, our target
language is the untyped lamdba calculus.
For clarity, we markup the the syntax using colors, using
$\source{blue}$ for the source language and $\target{red}$ for the target.
Aside from the standard forms, we
have two new expression forms to account for the ``alien''
expressions of one language appearing in the other:
$\talien{\sexp}{\stau}{}$, which \emph{embeds} a
$\stau$-typed source language expression into the target language, and 
$\salien{\texp}{\stau}{}$,
which \emph{unembeds} a target language expression to the source language at
type $\stau$.



We explain the reductions through a small example. Consider a source language term
$\sapp{\source{(}\sabs{\svar{x}}{\mathbb{\source{Z}}}{\svar{x}}\source{)}}{\source{0}}$, where we want to compile the
identity function to native code and then perform the application. The
first step is to translate the function to the target, by systematically
erasing the types, and unembed it in the source. In our example, the
source language term then becomes
$\sapp{\salien{\tabs{\tvar{x}}{\tvar{x}}}{\mathbb{\source{Z}} \source{->} \mathbb{\source{Z}}}{}}{\source{0}}$.





%

\paragraph{\bf Source semantics and unembedding (\autoref{fig:sourceseman})}
At a beta-reduction step (e.g. rule {\sc{S-App}} below), the source
semantics uses a meta-function
$\sforce{\source{v}}$ to examine the head of the application $\svar{v}$.
In addition to the usual values, the term
$\salien{\tval}{\stau}{\relax}$ is a source value \emph{iff}
$\tval$ is a target value. If the head is such a
value, $\sforce{}$ invokes the unembedding coercion
$\scoerce\stau\relax\tval$ to coerce it to a suitable source
value. For structural types, this amounts to lazily coercing the head
constructor of the term (see below). For function types, a source-level closure is
allocated, which first \emph{embeds} its argument in the target,
reduces a target application (using rule {\sc{S-Alien}}), and
then \emph{unembeds} the result back in the source.
%
%
%
%
%
%
\begin{figure}
\begin{mathpar}
  \inferrule*[right=S-Alien]
             {\texp \tevals \tval}
             {\salien{\texp}{\stau}{} \seval \salien{\tval}{\stau}{}}
  %
 %
  \and
  \inferrule*[right=S-App]
             {\sforce{\sval}=\sabs{\svar{x}}{\stau}{\sexp}}
             {\sapp{\sval}{\sexpii} \seval \sexp \ssubst{\svar{x}}{\sexpii}}
 \and
  \inferrule*{}{
    \sforce{\sval} =
    \begin{cases}
      \scoerce{\stau}{\relax}{\tval}
      & \text{if }\sval = \salien{\tval}{\stau}{\relax} \cr
        \sval
      & \text{otherwise}
    \end{cases}
  }
\end{mathpar}
\vspace{0.05cm}
\[
\begin{array}{r l l}
  \scoerce{\stunit}{}{\tunit} & = & \sunit

\\[0.3em]

  \scoerce{\sprod{\staui}{\stauii}}{\relax}{\tpair{\tvali}{\tvalii}} & = &
  \spair{\salien{\tvali}{\staui}{\relax}}{\salien{\tvalii}{\stauii}{\relax}}

\\[0.3em]

  \scoerce{\sarrow{\staui}{\stauii}}{\relax}{\texp} & = &
  \sabs{\svar{x}}{\staui}{\salien{\tapp{\texp}{\talien{\svar{x}}\staui\relax}}{\stauii}{\relax}}
\end{array}
\]
    \caption{Source semantics}
    \label{fig:sourceseman}
\end{figure}

In our example, reduction proceeds by rule {\sc{S-App}} by first
applying the unembedding coercion
$\scoerce{\source{\mathbb{Z} -> \mathbb{Z}}}{}{}$ to $\tabs{\tvar{x}}{\tvar{x}}$ to get
$\sabs{\svar{x}}{\mathbb{\source{Z}}}{\salien{\tapp{\target{(}\tabs{\tvar{x}}{\tvar{x}}\target{)}}{\talien{\svar{x}}{\mathbb{\source{Z}}}{}}}{\mathbb{\source{Z}}}{}}$,
and then applying the substitution to get
$\salien{\tapp{\target{(}\tabs{\tvar{x}}{\tvar{x}}\target{)}}{\talien{\source{0}}{\mathbb{\source{Z}}}{}}}{\mathbb{\source{Z}}}{}$.

%
%

\paragraph{\bf Target semantics and embeddings (\autoref{fig:targetseman})}
The target semantics is standard
 CBV.
 The only new rule is \textsc{T-Alien}, which first reduces the foreign
 source term using  the source semantics, and then proceeds
 to \emph{embed} the resulting value into the target
 language, using the embedding coercion $\tcoerce\stau\relax\sval$,
 as shown below. In our implementation, this rule is realized via a callback
 to the \fstar's normalizer.

 \begin{figure}
\begin{mathpar}
\inferrule*[right=T-Alien]
            {\sexp \sevals \sval}
            {\talien{\sexp}{\stau}{} \teval \tcoerce{\stau}{}{\sval}}
\and
\inferrule*[right=T-App]
            {\quad}
            {\tapp{\target{(}\tabs{\tvar{x}}{\texp}\target{)}}{\tval} \teval \texp\tsubst{\tvar{x}}{\tval}}
\end{mathpar}
\[
\begin{array}{r l l}
  \tcoerce{\sarrow{\staui}{\stauii}}{\relax}\sexp  & = & 
  \tabs{\tvar{x}}{\talien{\sapp\sexp{\salien{\tvar{x}}{\staui}{\relax}}}{\stauii}{\relax}}
  \\[0.3em]
  \tcoerce{\sprod{\staui}{\stauii}}{\relax}{\spair{\sexpi}{\sexpii}}  & = & 
  \tpair{\talien{\sexpi}{\staui}{\relax}}{\talien{\sexpii}{\stauii}{\relax}}
  \\[0.3em] 
  \tcoerce{\staup}{\relax}{\salien{\tval}{\stau}{\relax}}  & = &  \tval
  \text{ if } \stau \equiv \staup
\end{array}
\]
     \caption{Target semantics}
     \label{fig:targetseman}
 \end{figure}

The last case in the definition of the coercion cancels
 superfluous embeddings, in case the term inside is an
 unembedding. Continuing with our example,
$\salien{\tapp{\target{(}\tabs{\tvar{x}}{\tvar{x}}\target{)}}{\talien{\source{0}}{\mathbb{\source{Z}}}{}}}{\mathbb{\source{Z}}}{}$
 reduces using {\sc{S-Alien}}, which reduces the application in its
 premise by first reducing
 $\talien{\source{0}}{\mathbb{\source{Z}}}{}$
 to
$\target{0}$ (using rule {\sc{T-Alien}}, with
 $\tcoerce{\mathbb{Z}}{\relax}{\source{0}} = \target{0}$), and
then using rule {\sc{T-App}} to get
 $\target{0}$. The rule {\sc{S-Alien}} then returns
 $\salien{\target{0}}{\mathbb{\source{Z}}}{}$
 to the \fstar normalizer, which applies the unembedding coercion to
 get $\source{0}$.
 
\iffull

\paragraph{\bf Translation from source to target}
Fig.~\ref{fig:translation} shows a few
selected cases in the translation of source to target terms.
The key rule is \textsc{Abs}, which translates a source
abstraction to a target abstraction. Rather than maintaining an explicit mapping of
source variables to target variables, the translation \emph{embeds} the target
variable $\tvar x$
in the source body of the abstraction. Later on, the \textsc{Box} rule finds the
embedding, and removes it, leaving only the intended variable in the translated
target program. As such, any closed source term is translated to a closed target
term, without any embeddings or unembeddings in it.

Rule \textsc{Var} is therefore only triggered for open variables that have not
been unembedded by \textsc{Abs} already. The formalization accounts
for this, and remains valid in the presence of open
terms. From an implementation perspective, however, we cannot in general compile
open terms to OCaml. Therefore, we restrict ourselves to compiling top-level
definitions, in which the only open variables are other top-level definitions.
Such top-level definitions have a name, which we know how to compile to OCaml.
Because of this restriction, plugin extraction works at the level of top-level
definitions, which are the only AST nodes that can be marked with the
\ls$plugin$ attribute.

\begin{figure}[h]
  \centering
\begin{mathpar}
  \inferrule*[right=Var]
             {\\}
             {\svar{x} \trans \talien{\svar{x}}{\senv\spar{\svar{x}}}{\cdot}}
  \and
  \inferrule*[right=Box]
             {\\}
             {\salien{\texp}{\stau}{\sdelta} \trans \texp}
  \and
  \inferrule*[right=Pair]
             {\source{e_i} \trans \target{e_i}}
             { \spair{\source{e_1}}{\source{e_2}} \trans \tpair{\target{e_1}}{\target{e_2}}}
  \and
  \inferrule*[right=Abs]
             {\sexp\ssubst{\svar{x}}{\salien{\tvar{x}}{\source{\tau}}{\cdot}} \trans \texp}
             {\sabs{\svar{x}}{\stau}{\sexp} \trans \tabs{\tvar{x}}{\texp}}
  \and
  \inferrule*[right=App]
             {\source{e_i} \trans \target{e_i}}
             {\sapp{\source{e_1}}{\source{e_2}} \trans \tapp{\target{e_1}}{\target{e_2}}}
           \end{mathpar}
           \caption{Translation from source to target}
           \label{fig:translation}
\end{figure}
\fi

\subsection*{Part 2: Adding ML-style Polymorphism} 
\label{sec:parametricity}


The key observation in adding ML-style polymorphism is as
follows. Since we only consider target terms
compiled from well-typed source terms, by virtue of
parametricity, a compiled polymorphic function must treat its
argument abstractly. As a result, the normalizers can pass such
arguments \emph{as-is} without applying any embedding coercions---passing
an open term is simply a subcase without further
difficulty. Thus, we can leverage polymorphism to support such limited
but useful open reductions.

In the formal model, we add an \emph{opaque} construct to the target
language, written as
$\topaque\sexp$, which denotes an embedding of $\sexp$ at some
abstract type $\source{A}$ (as mentioned above, coercion to opaque is
a no-op in the implementation\guido{not mentioned}).
The coercion functions are extended
with rules to introduce and eliminate opaque terms (the $\sdelta$
superscript carries type substitutions as a technical device), e.g.:
\[
\begin{array}{c c c}
  \tcoerce{\svar{A}}{\sdelta}{\sexp}  = \topaque{\sexp}
  \quad & \quad
  \scoerce{\svar{A}}{\sdelta}{\topaque{\sexp}} = \sexp
  \quad
  &
  \quad
  \tcoerce{\sprod{\staui}{\stauii}}{\sdelta}{\spair{\sexpi}{\sexpii}} =
  \tpair{\prot{\source{e_1}}{\source{\tau_1}}{\sdelta}}{\prot{\source{e_2}}{\source{\tau_2}}{\sdelta}}
\end{array}
\]
where $\prot{\sexp}{\stau}{\sdelta}$ is $\topaque{\sexp}$ when
$\stau = \source{A}$, or a usual type-directed embedding otherwise. The source
semantics is 
extended with a rule for type applications, while target
semantics, being untyped, remains as-is.

Consider an application of the polymorphic identity function to a
variable $\svar{y}$:
$\sapp{\stapp{(\stabs{\svar{A}}{\sabs{\svar{x}}{\svar{A}}{\svar{x}}})}{\source{\mathbb{Z}}}}{\svar{y}}$.
As before, we would like to compile the function to native code and
then reduce the applications---except this time it is an open term.
Translation of the function and its embedding in the source yields
$\sapp{\stapp{\salien{\tabs{\tvar{x}}{\tvar{x}}}{\sall{\svar{A}}{\sarrow{\svar{A}}{\svar{A}}}}{\sesubstemp}}{\source{\mathbb{Z}}}}{\svar{y}}$.
We now apply the unembedding coercion
$\scoerce{\sall{\svar{A}}{\sarrow{\svar{A}}{\svar{A}}}}{\sesubstemp}{\tabs{\tvar{x}}{\tvar{x}}}$
which returns \linebreak a source type abstraction
$\stabs{\svar{A}}{\scoerce{\sarrow{\svar{A}}{\svar{A}}}{\sesubstemp}{\tabs{\tvar{x}}{\tvar{x}}}}$.
Reducing the arrow coercion next, we get
$\stabs{\svar{A}}{\sabs{\svar{x}}{\svar{A}}{\salien{\tapp{(\tabs{\tvar{x}}{\tvar{x}})}{\prot{\svar{x}}{\svar{A}}{\sesubstemp}}}{\svar{A}}{\sesubstemp}}}$,
where $\prot{\svar{x}}{\svar{A}}{\sesubstemp} = \topaque{\svar{x}}$, as mentioned above.
This is the key idea---when the original application reaches the beta-reduction with
argument $\svar{y}$ (after the type application is reduced), $\svar{y}$ is simply substituted
in this opaque construct, which is then returned back to the \fstar normalizer as
$\salien{\topaque{\svar{y}}}{\svar{A}}{\sesubstsingl{\svar{A}}{\source{\mathbb{Z}}}}$.
The normalizer can then apply the identity unembedding and get $\svar{y}$.

This ability to reduce open terms relying on polymorphism allows us to
evaluate expressions like
$\sapp{\sapp{\stapp{\stapp{\salien{\target{\text{\ls{List.map}}}}{\sall{\svar{A}}{\sall{\svar{B}}{\sarrow{\sarrow{(\sarrow{\svar{A}}{\svar{B}})}{\source{\text{\ls{list}}} \svar{A}}}{\source{\text{\ls{list}}} \svar{B}}}}}{\sesubstemp}}{\source{\mathbb{Z}}}}{\source{\mathbb{Z}}}}{\source{\text{\ls{f}}}}}{\source{\text{\ls{[0; ...; 10}$^6$\ls{]}}}}$,
with \ls$f$ a free variable,
using native datastructure implementations (\ls{List} in this case), 
which is much faster than using the normalizers.

\subsection*{Part 3: Full Details on our Multi-language Interoperability Model}

\begin{wrapfigure}[htb]{l}{\textwidth}%
      \centering
      \begin{framed}
        \input{interop_lazy/src_syntax.tex}
      \end{framed}
      \caption{Source Language}
      \label{fig:sourcelanglong}
\end{wrapfigure}

\begin{wrapfigure}[htb]{l}{\textwidth}%
      \centering
      \begin{framed}
        \input{interop_lazy/trg_syntax.tex}
      \end{framed}
      \caption{Target Language}
\end{wrapfigure}

\begin{wrapfigure}[htb]{l}{\textwidth}%
  \label{fig:srctyp}
  \fbox{$\iswf{\stenv}{\stau}$}
  \begin{framed}
    \input{interop_lazy/wf_typ}
  \end{framed}
  \caption{Type Well-formedness}
\end{wrapfigure}

\begin{wrapfigure}[htb]{l}{\textwidth}%
  \label{fig:srctyp}
  \fbox{$\iswf{\stenv}{\stau}$}
  \begin{framed}
    \input{interop_lazy/wf_subst}
  \end{framed}
  \caption{Type Substitution Well-formedness}
\end{wrapfigure}

\begin{wrapfigure}[htb]{l}{\textwidth}%
  \label{fig:srctyp}
  \fbox{$\hastyp{\stenv}{\senv}{\sexp}{\stau}$}
  \begin{framed}
    \input{interop_lazy/src_typing}
  \end{framed}
  \caption{Source Typing}
\end{wrapfigure}


\begin{wrapfigure}[htb]{l}{\textwidth}%
  \label{fig:srceval}
  \fbox{$\sexp \seval \sexpp$}
  \begin{framed}
    \input{interop_lazy/src_eval}
    where
    \input{interop_lazy/force}
  \end{framed}
  \caption{Evaluation in the Source Language}
\end{wrapfigure}

\begin{wrapfigure}[htb]{l}{\textwidth}%
  \label{fig:trgeval}
  \fbox{$\texp \teval \texpp$}
  \begin{framed}
    \input{interop_lazy/trg_eval}
  \end{framed}
  \caption{Evaluation in the Target Language}
\end{wrapfigure}

\medskip

\begin{wrapfigure}[htb]{l}{\textwidth}%
    \centering
    \begin{minipage}[t]{0.45\textwidth}\label{fig:stypsubst}
      \centering
      \begin{framed}
        \input{interop_lazy/stypsubst}
      \end{framed}
      \caption{Type Substitution in Source Terms}
    \end{minipage}\hfill
    \begin{minipage}[t]{0.45\textwidth}\label{fig:ttypsubst}
      \centering
      \begin{framed}
       \input{interop_lazy/ttypsubst}
      \end{framed}
      \caption{Type Substitution in Target Terms}
    \end{minipage}
\end{wrapfigure}

\medskip

\begin{wrapfigure}[htb]{L}{\textwidth}%
  \label{fig:trans}
  \fbox{$\sexp \trans \texp$} \\
  $\senv$ is a global parameter 
  \begin{framed}
    \input{interop_lazy/trans}
  \end{framed}
  \caption{Translation}
\end{wrapfigure}

\medskip

\begin{wrapfigure}[htb]{L}{\textwidth}%
    \centering
      \centering
      \begin{framed}
        \input{interop_lazy/scoerce}
      \end{framed}
      \caption{Target to Source Coercion}
      \medskip\medskip\medskip\medskip\medskip
      \centering
      \begin{framed}
        \input{interop_lazy/tcoerce}
      \end{framed}
      \caption{Source to Target Coercion}
\end{wrapfigure}

\medskip

\begin{wrapfigure}[htb]{l}{\textwidth}%
  \label{fig:protect}
  \begin{framed}
    \input{interop_lazy/protect}
  \end{framed}
  \caption{Protect meta-function}
\end{wrapfigure}

\medskip

\begin{wrapfigure}[htb]{l}{\textwidth}%
  \label{fig:typeq}
  \begin{framed}
    \input{interop_lazy/eqtyp}
  \end{framed}
  \caption{Type equivalence}
\end{wrapfigure}

\medskip

\begin{wrapfigure}[htb]{l}{\textwidth}%
  \label{fig:trans}
  \fbox{$\sexp \trans \texp$} \\
  \begin{framed}
    \input{interop_lazy/trans}
  \end{framed}
  \caption{Translation}
  \label{fig:translationlong}
\end{wrapfigure}

\fi

\bibliographystyle{abbrvnaturl}
\bibliography{fstar}

\begin{thebibliography}{65}
\providecommand{\natexlab}[1]{#1}

\bibitem[Ahman et~al.(2017)Ahman, Hri\c{t}cu, Maillard, Mart\'inez, Plotkin,
  Protzenko, Rastogi, and Swamy]{dm4free}
D.~Ahman, C.~Hri\c{t}cu, K.~Maillard, G.~Mart\'inez, G.~Plotkin, J.~Protzenko,
  A.~Rastogi, and N.~Swamy.
\newblock \href {http://dx.doi.org/10.1145/3009837.3009878} {Dijkstra monads
  for free}.
\newblock \iflongrefs{In \emph{44th ACM SIGPLAN Symposium on Principles of
  Programming Languages (POPL)}}\else{\emph{POPL}}\fi{}. \iflongrefs{Jan.
  }\fi{}2017.

\bibitem[Ahman et~al.(2018)Ahman, Fournet, Hri\c{t}cu, Maillard, Rastogi, and
  Swamy]{preorders}
D.~Ahman, C.~Fournet, C.~Hri\c{t}cu, K.~Maillard, A.~Rastogi, and N.~Swamy.
\newblock \href {https://arxiv.org/abs/1707.02466} {Recalling a witness:
  Foundations and applications of monotonic state}.
\newblock \emph{{PACMPL}}, 2\penalty0 ({POPL}):\penalty0 65:1--65:30,
  \iflongrefs{Jan. }\fi{}2018.

\bibitem[Amin and Rompf(2017)]{Amin:2017:LAW:3009837.3009867}
N.~Amin and T.~Rompf.
\newblock \href {http://dx.doi.org/10.1145/3009837.3009867} {{LMS-Verify}:
  Abstraction without regret for verified systems programming}.
\newblock \iflongrefs{In \emph{Proceedings of the 44th ACM SIGPLAN Symposium on
  Principles of Programming Languages}}\else{\emph{POPL}}\fi{}. 2017.

\bibitem[Anand et~al.(2018)Anand, Boulier, Cohen, Sozeau, and
  Tabareau]{AnandBCST18}
A.~Anand, S.~Boulier, C.~Cohen, M.~Sozeau, and N.~Tabareau.
\newblock \href {https://template-coq.github.io/template-coq/} {Towards
  certified meta-programming with typed {Template-Coq}}.
\newblock In \emph{{ITP}}. 2018.

\bibitem[Appel(2006)]{Appel06}
A.~W. Appel.
\newblock \href {https://www.cs.princeton.edu/~appel/papers/septacs.pdf}
  {Tactics for separation logic}.
\newblock Early Draft, \iflongrefs{Jan. }\fi{}2006.

\bibitem[Awodey and Bauer(2004)]{AwodeyBauer}
S.~Awodey and A.~Bauer.
\newblock \href {http://dx.doi.org/10.1093/logcom/14.4.447} {Propositions as
  [{Types}]}.
\newblock \emph{J. Log. and Comput.}, 14\penalty0 (4):\penalty0 447--471,
  \iflongrefs{Aug. }\fi{}2004.

\bibitem[Barendregt and Geuvers(2001)]{Barendregt:2001:PUD:778522.778527}
H.~Barendregt and H.~Geuvers.
\newblock \href {http://dl.acm.org/citation.cfm?id=778522.778527} {Handbook of
  automated reasoning}.
\newblock chapter Proof-assistants Using Dependent Type Systems, pages
  1149--1238. Elsevier Science Publishers B. V., Amsterdam, The Netherlands,
  2001.

\bibitem[Barnett et~al.(2005{\natexlab{a}})Barnett, Chang, DeLine, Jacobs, and
  Leino]{BarnettCDJL05}
M.~Barnett, B.~E. Chang, R.~DeLine, B.~Jacobs, and K.~R.~M. Leino.
\newblock \href {http://dx.doi.org/10.1007/11804192_17} {{Boogie}: {A} modular
  reusable verifier for object-oriented programs}.
\newblock \iflongrefs{In F.~S. de~Boer, M.~M. Bonsangue, S.~Graf, and W.~P.
  de~Roever, editors, \emph{Formal Methods for Components and Objects, 4th
  International Symposium, {FMCO} 2005, Amsterdam, The Netherlands, November
  1-4, 2005, Revised Lectures}}\else{\emph{FMCO}}\fi{}. 2005{\natexlab{a}}.

\bibitem[Barnett et~al.(2005{\natexlab{b}})Barnett, DeLine, F{\"{a}}hndrich,
  Jacobs, Leino, Schulte, and Venter]{BarnettDFJLSV05}
M.~Barnett, R.~DeLine, M.~F{\"{a}}hndrich, B.~Jacobs, K.~R.~M. Leino,
  W.~Schulte, and H.~Venter.
\newblock \href {http://dx.doi.org/10.1007/978-3-540-69149-5_16} {The {Spec\#}
  programming system: Challenges and directions}.
\newblock \iflongrefs{In B.~Meyer and J.~Woodcock, editors, \emph{Verified
  Software: Theories, Tools, Experiments, First {IFIP} {TC} 2/WG 2.3
  Conference, {VSTTE} 2005, Zurich, Switzerland, October 10-13, 2005, Revised
  Selected Papers and Discussions}}\else{\emph{VSTTE}}\fi{}.
  2005{\natexlab{b}}.

\bibitem[Barras et~al.()Barras, Gr\'egoire, Mahboubi, and Th\'ery]{ring}
B.~Barras, B.~Gr\'egoire, A.~Mahboubi, and L.~Th\'ery.
\newblock {Coq} reference manual; chapter 25: The ring and field tactic
  families.
\newblock Available at \url{https://coq.inria.fr/refman/ring.html}.

\bibitem[Berger and Schwichtenberg(1991)]{bergerNbe}
U.~Berger and H.~Schwichtenberg.
\newblock \href {http://dx.doi.org/10.1109/LICS.1991.151645} {An inverse of the
  evaluation functional for typed lambda-calculus}.
\newblock \iflongrefs{In \emph{Proceedings of the Sixth Annual Symposium on
  Logic in Computer Science {(LICS} '91), Amsterdam, The Netherlands, July
  15-18, 1991}}\else{\emph{LICS}}\fi{}. 1991.

\bibitem[Bernstein(2005)]{poly1305}
D.~J. Bernstein.
\newblock \href {https://cr.yp.to/mac/poly1305-20050329.pdf} {The
  {Poly1305-AES} message-authentication code}.
\newblock \iflongrefs{In \emph{Proceedings of Fast Software
  Encryption}}\else{\emph{FSE}}\fi{}. \iflongrefs{Mar. }\fi{}2005.

\bibitem[Besson(2006)]{Besson06}
F.~Besson.
\newblock \href {http://dx.doi.org/10.1007/978-3-540-74464-1_4} {Fast reflexive
  arithmetic tactics the linear case and beyond}.
\newblock \iflongrefs{In T.~Altenkirch and C.~McBride, editors, \emph{Types for
  Proofs and Programs, International Workshop, {TYPES} 2006, Nottingham, UK,
  April 18-21, 2006, Revised Selected Papers}}\else{\emph{TYPES}}\fi{}. 2006.

\bibitem[Bhargavan et~al.(2017)Bhargavan, Bond, Delignat-Lavaud, Fournet,
  Hawblitzel, Hri\c{t}cu, Ishtiaq, Kohlweiss, Leino, Lorch, Maillard, Pang,
  Parno, Protzenko, Ramananandro, Rane, Rastogi, Swamy, Thompson, Wang,
  {Zanella-B\'eguelin}, and Zinzindohou\'e]{everest}
K.~Bhargavan, B.~Bond, A.~Delignat-Lavaud, C.~Fournet, C.~Hawblitzel,
  C.~Hri\c{t}cu, S.~Ishtiaq, M.~Kohlweiss, R.~Leino, J.~Lorch, K.~Maillard,
  J.~Pang, B.~Parno, J.~Protzenko, T.~Ramananandro, A.~Rane, A.~Rastogi,
  N.~Swamy, L.~Thompson, P.~Wang, S.~{Zanella-B\'eguelin}, and J.-K.
  Zinzindohou\'e.
\newblock \href
  {http://drops.dagstuhl.de/opus/volltexte/2017/7119/pdf/LIPIcs-SNAPL-2017-1.pdf}
  {Everest: Towards a verified, drop-in replacement of {HTTPS}}.
\newblock \iflongrefs{In \emph{2nd Summit on Advances in Programming
  Languages}}\else{\emph{SNAPL}}\fi{}, \iflongrefs{May }\fi{}2017.

\bibitem[Blanchette and Popescu(2013)]{Blanchette013}
J.~C. Blanchette and A.~Popescu.
\newblock \href {http://dx.doi.org/10.1007/978-3-642-40885-4_17} {Mechanizing
  the metatheory of {Sledgehammer}}.
\newblock \iflongrefs{In P.~Fontaine, C.~Ringeissen, and R.~A. Schmidt,
  editors, \emph{Frontiers of Combining Systems - 9th International Symposium,
  FroCoS 2013, Nancy, France, September 18-20, 2013.
  Proceedings}}\else{\emph{FroCoS}}\fi{}. 2013.

\bibitem[Blanchette et~al.(2013)Blanchette, B{\"{o}}hme, and
  Paulson]{BlanchetteBP13}
J.~C. Blanchette, S.~B{\"{o}}hme, and L.~C. Paulson.
\newblock \href {http://dx.doi.org/10.1007/s10817-013-9278-5} {Extending
  {Sledgehammer} with {SMT} solvers}.
\newblock \iflongrefs{\emph{J. Autom. Reasoning}}\else{\emph{JAR}}\fi{},
  51\penalty0 (1):\penalty0 109--128, 2013.

\bibitem[Boespflug et~al.(2011)Boespflug, D{\'{e}}n{\`{e}}s, and
  Gr{\'{e}}goire]{BoespflugDG11}
M.~Boespflug, M.~D{\'{e}}n{\`{e}}s, and B.~Gr{\'{e}}goire.
\newblock \href {http://dx.doi.org/10.1007/978-3-642-25379-9_26} {Full
  reduction at full throttle}.
\newblock \iflongrefs{In J.~Jouannaud and Z.~Shao, editors, \emph{Certified
  Programs and Proofs - First International Conference, {CPP} 2011, Kenting,
  Taiwan, December 7-9, 2011. Proceedings}}\else{\emph{CPP}}\fi{}. 2011.

\bibitem[B{\"{o}}hme and Weber(2010)]{BohmeW10}
S.~B{\"{o}}hme and T.~Weber.
\newblock \href {http://dx.doi.org/10.1007/978-3-642-14052-5_14} {Fast
  {LCF}-style proof reconstruction for {Z3}}.
\newblock \iflongrefs{In M.~Kaufmann and L.~C. Paulson, editors,
  \emph{Interactive Theorem Proving, First International Conference, {ITP}
  2010, Edinburgh, UK, July 11-14, 2010. Proceedings}}\else{\emph{ITP}}\fi{}.
  2010.

\bibitem[Bond et~al.(2017)Bond, Hawblitzel, Kapritsos, Leino, Lorch, Parno,
  Rane, Setty, and Thompson]{vale}
B.~Bond, C.~Hawblitzel, M.~Kapritsos, K.~R.~M. Leino, J.~R. Lorch, B.~Parno,
  A.~Rane, S.~T.~V. Setty, and L.~Thompson.
\newblock \href
  {https://www.usenix.org/conference/usenixsecurity17/technical-sessions/presentation/bond}
  {Vale: Verifying high-performance cryptographic assembly code}.
\newblock \iflongrefs{In E.~Kirda and T.~Ristenpart, editors, \emph{26th
  {USENIX} Security Symposium, {USENIX} Security 2017, Vancouver, BC, Canada,
  August 16-18, 2017.}}\else{\emph{{USENIX} Security}}\fi{}. 2017.

\bibitem[Burdy et~al.(2005)Burdy, Cheon, Cok, Ernst, Kiniry, Leavens, Leino,
  and Poll]{BurdyCCEKLLP05}
L.~Burdy, Y.~Cheon, D.~R. Cok, M.~D. Ernst, J.~R. Kiniry, G.~T. Leavens,
  K.~R.~M. Leino, and E.~Poll.
\newblock \href {http://dx.doi.org/10.1007/s10009-004-0167-4} {An overview of
  {JML} tools and applications}.
\newblock \emph{{STTT}}, 7\penalty0 (3):\penalty0 212--232, 2005.

\bibitem[Chaieb and Nipkow(2008)]{ChaiebN08}
A.~Chaieb and T.~Nipkow.
\newblock \href {http://dx.doi.org/10.1007/s10817-008-9101-x} {Proof synthesis
  and reflection for linear arithmetic}.
\newblock \emph{J. Autom. Reasoning}, 41\penalty0 (1):\penalty0 33--59, 2008.

\bibitem[Chargu{\'e}raud(2012)]{ChargueraudLocallyNameless}
A.~Chargu{\'e}raud.
\newblock \href {http://dx.doi.org/10.1007/s10817-011-9225-2} {The locally
  nameless representation}.
\newblock \emph{Journal of Automated Reasoning}, 49\penalty0 (3):\penalty0
  363--408, \iflongrefs{Oct }\fi{}2012.

\bibitem[Christiansen and Brady(2016)]{ChristiansenB16}
D.~R. Christiansen and E.~Brady.
\newblock \href {http://dx.doi.org/10.1145/2951913.2951932} {Elaborator
  reflection: extending {Idris} in {Idris}}.
\newblock \iflongrefs{In J.~Garrigue, G.~Keller, and E.~Sumii, editors,
  \emph{Proceedings of the 21st {ACM} {SIGPLAN} International Conference on
  Functional Programming, {ICFP} 2016, Nara, Japan, September 18-22,
  2016}}\else{\emph{ICFP}}\fi{}. 2016.

\bibitem[Cohen et~al.(2010)Cohen, Moskal, Schulte, and Tobies]{CohenMST10}
E.~Cohen, M.~Moskal, W.~Schulte, and S.~Tobies.
\newblock \href {http://dx.doi.org/10.1007/978-3-642-14295-6_42} {Local
  verification of global invariants in concurrent programs}.
\newblock \iflongrefs{In T.~Touili, B.~Cook, and P.~B. Jackson, editors,
  \emph{Computer Aided Verification, 22nd International Conference, {CAV} 2010,
  Edinburgh, UK, July 15-19, 2010. Proceedings}}\else{\emph{CAV}}\fi{}. 2010.

\bibitem[Cr{\'{e}}gut(2007)]{cregut2007strongly}
P.~Cr{\'{e}}gut.
\newblock \href {http://dx.doi.org/10.1007/s10990-007-9015-z} {Strongly
  reducing variants of the {Krivine} abstract machine}.
\newblock \iflongrefs{\emph{Higher-Order and Symbolic
  Computation}}\else{\emph{HOSC}}\fi{}, 20\penalty0 (3):\penalty0 209--230,
  2007.

\bibitem[Cuoq et~al.(2012)Cuoq, Kirchner, Kosmatov, Prevosto, Signoles, and
  Yakobowski]{CuoqKKPSY12}
P.~Cuoq, F.~Kirchner, N.~Kosmatov, V.~Prevosto, J.~Signoles, and B.~Yakobowski.
\newblock \href {http://dx.doi.org/10.1007/978-3-642-33826-7\_16} {{Frama-C} -
  {A} software analysis perspective}.
\newblock \iflongrefs{In G.~Eleftherakis, M.~Hinchey, and M.~Holcombe, editors,
  \emph{Software Engineering and Formal Methods - 10th International
  Conference, {SEFM} 2012, Thessaloniki, Greece, October 1-5, 2012.
  Proceedings}}\else{\emph{SEFM}}\fi{}. 2012.

\bibitem[Czajka and Kaliszyk(2017)]{CzajkaK17}
L.~Czajka and C.~Kaliszyk.
\newblock \href {http://cl-informatik.uibk.ac.at/cek/coqhammer/coqhammer.pdf}
  {Hammer for {Coq}: Automation for dependent type theory}.
\newblock Submitted to JAR, 2017.

\bibitem[de~Moura and Passmore(2013)]{smttactics}
L.~de~Moura and G.~O. Passmore.
\newblock \href {http://dl.acm.org/citation.cfm?id=2554473.2554475} {Automated
  reasoning and mathematics}.
\newblock chapter The Strategy Challenge in SMT Solving, pages 15--44.
  Springer-Verlag, Berlin, Heidelberg, 2013.

\bibitem[de~Moura and Bj{\o}rner(2008)]{MouraB08}
L.~M. de~Moura and N.~Bj{\o}rner.
\newblock \href {http://dx.doi.org/10.1007/978-3-540-78800-3_24} {{Z3:} an
  efficient {SMT} solver}.
\newblock \iflongrefs{In \emph{14th International Conference on Tools and
  Algorithms for the Construction and Analysis of Systems,
  {TACAS}}}\else{\emph{TACAS}}\fi{}. 2008.

\bibitem[Delahaye(2000)]{Delahaye00}
D.~Delahaye.
\newblock \href {http://dx.doi.org/10.1007/3-540-44404-1_7} {A tactic language
  for the system {Coq}}.
\newblock \iflongrefs{In M.~Parigot and A.~Voronkov, editors, \emph{Logic for
  Programming and Automated Reasoning, 7th International Conference, {LPAR}
  2000, Reunion Island, France, November 11-12, 2000,
  Proceedings}}\else{\emph{LPAR}}\fi{}. 2000.

\bibitem[Ebner et~al.(2017)Ebner, Ullrich, Roesch, Avigad, and
  de~Moura]{EbnerURAM17}
G.~Ebner, S.~Ullrich, J.~Roesch, J.~Avigad, and L.~de~Moura.
\newblock \href {http://dx.doi.org/10.1145/3110278} {A metaprogramming
  framework for formal verification}.
\newblock \emph{{PACMPL}}, 1\penalty0 ({ICFP}):\penalty0 34:1--34:29, 2017.

\bibitem[Ekici et~al.(2017)Ekici, Mebsout, Tinelli, Keller, Katz, Reynolds, and
  Barrett]{smtcoqCAV2017}
B.~Ekici, A.~Mebsout, C.~Tinelli, C.~Keller, G.~Katz, A.~Reynolds, and C.~W.
  Barrett.
\newblock \href {http://dx.doi.org/10.1007/978-3-319-63390-9\_7} {{SMTCoq}: {A}
  plug-in for integrating {SMT} solvers into {Coq}}.
\newblock \iflongrefs{In \emph{Computer Aided Verification - 29th International
  Conference, {CAV} 2017, Heidelberg, Germany, July 24-28, 2017, Proceedings,
  Part {II}}}\else{\emph{CAV}}\fi{}, 2017.

\bibitem[Erbsen et~al.(2019)Erbsen, Philipoom, Gross, Sloan, and
  Chlipala]{fiat-crypto}
A.~Erbsen, J.~Philipoom, J.~Gross, R.~Sloan, and A.~Chlipala.
\newblock \href {http://dx.doi.org/10.1109/SP.2019.00005} {Simple high-level
  code for cryptographic arithmetic - with proofs, without compromises}.
\newblock \iflongrefs{In \emph{IEEE S\&P}}\else{\emph{IEEE S\&P}}\fi{}, 2019.

\bibitem[Filli\^atre and Paskevich(2013)]{filliatre13why3}
J.-C. Filli\^atre and A.~Paskevich.
\newblock \href {https://hal.inria.fr/hal-00789533/document} {Why3 --- where
  programs meet provers}.
\newblock \iflongrefs{In M.~Felleisen and P.~Gardner, editors,
  \emph{Proceedings of the 22nd European Symposium on
  Programming}}\else{\emph{ESOP}}\fi{}. \iflongrefs{Mar. }\fi{}2013.

\bibitem[Flanagan et~al.(2013)Flanagan, Leino, Lillibridge, Nelson, Saxe, and
  Stata]{FlanaganLLNSS13}
C.~Flanagan, K.~R.~M. Leino, M.~Lillibridge, G.~Nelson, J.~B. Saxe, and
  R.~Stata.
\newblock \href {http://dx.doi.org/10.1145/2502508.2502520} {{PLDI} 2002:
  Extended static checking for {Java}}.
\newblock \emph{{SIGPLAN} Notices}, 48\penalty0 (4S):\penalty0 22--33, 2013.

\bibitem[Fromherz et~al.(2019)Fromherz, Giannarakis, Hawblitzel, Parno,
  Rastogi, and Swamy]{valefstar}
A.~Fromherz, N.~Giannarakis, C.~Hawblitzel, B.~Parno, A.~Rastogi, and N.~Swamy.
\newblock \href
  {https://github.com/project-everest/project-everest.github.io/raw/master/assets/vale-popl.pdf}
  {A verified, efficient embedding of a verifiable assembly language}.
\newblock \emph{{PACMPL}}, \penalty0 ({POPL}), 2019.

\bibitem[Gonthier(2008)]{gonthier2008formal}
G.~Gonthier.
\newblock \href {https://www.ams.org/notices/200811/tx081101382p.pdf} {Formal
  proof--the four-color theorem}.
\newblock \emph{Notices of the AMS}, 55\penalty0 (11):\penalty0 1382--1393,
  2008.

\bibitem[Gordon et~al.(1979)Gordon, Milner, and
  Wadsworth]{gordon-edinburghLCF:1979}
M.~J.~C. Gordon, R.~Milner, and C.~Wadsworth.
\newblock \emph{Edinburgh {LCF}: A Mechanized Logic of Computation}.
\newblock Springer-Verlag, 1979.

\bibitem[Gr{\'{e}}goire and Mahboubi(2005)]{GregoireM05}
B.~Gr{\'{e}}goire and A.~Mahboubi.
\newblock \href {http://dx.doi.org/10.1007/11541868_7} {Proving equalities in a
  commutative ring done right in {Coq}}.
\newblock \iflongrefs{In J.~Hurd and T.~F. Melham, editors, \emph{Theorem
  Proving in Higher Order Logics, 18th International Conference, TPHOLs 2005,
  Oxford, UK, August 22-25, 2005, Proceedings}}\else{\emph{TPHOLs}}\fi{}. 2005.

\bibitem[Grov and Tumas(2016)]{GrovT16}
G.~Grov and V.~Tumas.
\newblock \href {http://dx.doi.org/10.1007/978-3-662-49674-9\_3} {Tactics for
  the {Dafny} program verifier}.
\newblock \iflongrefs{In M.~Chechik and J.~Raskin, editors, \emph{Tools and
  Algorithms for the Construction and Analysis of Systems - 22nd International
  Conference, {TACAS} 2016, Held as Part of the European Joint Conferences on
  Theory and Practice of Software, {ETAPS} 2016, Eindhoven, The Netherlands,
  April 2-8, 2016, Proceedings}}\else{\emph{TACAS}}\fi{}. 2016.

\bibitem[Hawblitzel et~al.(2014)Hawblitzel, Howell, Lorch, Narayan, Parno,
  Zhang, and Zill]{Ironclad}
C.~Hawblitzel, J.~Howell, J.~R. Lorch, A.~Narayan, B.~Parno, D.~Zhang, and
  B.~Zill.
\newblock \href
  {https://www.usenix.org/conference/osdi14/technical-sessions/presentation/hawblitzel}
  {{Ironclad Apps}: End-to-end security via automated full-system
  verification}.
\newblock \iflongrefs{In J.~Flinn and H.~Levy, editors, \emph{11th {USENIX}
  Symposium on Operating Systems Design and Implementation, {OSDI} '14,
  Broomfield, CO, USA, October 6-8, 2014.}}\else{\emph{OSDI}}\fi{}. 2014.

\bibitem[Hawblitzel et~al.(2017)Hawblitzel, Howell, Kapritsos, Lorch, Parno,
  Roberts, Setty, and Zill]{IronFleet}
C.~Hawblitzel, J.~Howell, M.~Kapritsos, J.~R. Lorch, B.~Parno, M.~L. Roberts,
  S.~T.~V. Setty, and B.~Zill.
\newblock \href {http://dx.doi.org/10.1145/3068608} {Ironfleet: proving safety
  and liveness of practical distributed systems}.
\newblock \iflongrefs{\emph{Commun. {ACM}}}\else{\emph{CACM}}\fi{}, 60\penalty0
  (7):\penalty0 83--92, 2017.

\bibitem[Kaiser et~al.(2018)Kaiser, Ziliani, Krebbers, R{\'{e}}gis{-}Gianas,
  and Dreyer]{Mtac2}
J.~Kaiser, B.~Ziliani, R.~Krebbers, Y.~R{\'{e}}gis{-}Gianas, and D.~Dreyer.
\newblock \href {http://dx.doi.org/10.1145/3236773} {{Mtac2}: typed tactics for
  backward reasoning in {Coq}}.
\newblock \emph{{PACMPL}}, 2\penalty0 ({ICFP}):\penalty0 78:1--78:31, 2018.

\bibitem[Kaliszyk and Urban(2014)]{KaliszykU14}
C.~Kaliszyk and J.~Urban.
\newblock \href {http://dx.doi.org/10.1007/s10817-014-9303-3}
  {Learning-assisted automated reasoning with {Flyspeck}}.
\newblock \iflongrefs{\emph{J. Autom. Reasoning}}\else{\emph{JAR}}\fi{},
  53\penalty0 (2):\penalty0 173--213, 2014.

\bibitem[Kaliszyk and Urban(2015)]{KaliszykU15a}
C.~Kaliszyk and J.~Urban.
\newblock \href {http://dx.doi.org/10.1007/s10817-015-9330-8} {{MizAR} 40 for
  {Mizar} 40}.
\newblock \iflongrefs{\emph{J. Autom. Reasoning}}\else{\emph{JAR}}\fi{},
  55\penalty0 (3):\penalty0 245--256, 2015.

\bibitem[Krebbers et~al.(2017)Krebbers, Timany, and Birkedal]{KrebbersTB17}
R.~Krebbers, A.~Timany, and L.~Birkedal.
\newblock \href {http://dl.acm.org/citation.cfm?id=3009855} {Interactive proofs
  in higher-order concurrent separation logic}.
\newblock \iflongrefs{In G.~Castagna and A.~D. Gordon, editors,
  \emph{Proceedings of the 44th {ACM} {SIGPLAN} Symposium on Principles of
  Programming Languages, {POPL} 2017, Paris, France, January 18-20,
  2017}}\else{\emph{POPL}}\fi{}. 2017.

\bibitem[Krivine(2007)]{Krivine2007}
J.-L. Krivine.
\newblock \href {http://dx.doi.org/10.1007/s10990-007-9018-9} {A call-by-name
  lambda-calculus machine}.
\newblock \emph{Higher Order Symbol. Comput.}, 20\penalty0 (3):\penalty0
  199--207, \iflongrefs{Sept. }\fi{}2007.

\bibitem[Leino(2010)]{leino10dafny}
K.~R.~M. Leino.
\newblock \href {http://dl.acm.org/citation.cfm?id=1939141.1939161} {Dafny: An
  automatic program verifier for functional correctness}.
\newblock \iflongrefs{In \emph{Proceedings of the 16th International Conference
  on Logic for Programming, Artificial Intelligence, and
  Reasoning}}\else{\emph{LPAR}}\fi{}. 2010.

\bibitem[Leino and Nelson(1998)]{LeinoN98}
K.~R.~M. Leino and G.~Nelson.
\newblock \href {http://dx.doi.org/10.1007/BFb0026441} {An extended static
  checker for {Modula-3}}.
\newblock \iflongrefs{In K.~Koskimies, editor, \emph{Compiler Construction, 7th
  International Conference, CC'98, Held as Part of the European Joint
  Conferences on the Theory and Practice of Software, ETAPS'98, Lisbon,
  Portugal, March 28 - April 4, 1998, Proceedings}}\else{\emph{CC}}\fi{}. 1998.

\bibitem[McCreight(2009)]{McCreight09}
A.~McCreight.
\newblock \href {http://dx.doi.org/10.1007/978-3-642-03359-9\_24} {Practical
  tactics for separation logic}.
\newblock \iflongrefs{In S.~Berghofer, T.~Nipkow, C.~Urban, and M.~Wenzel,
  editors, \emph{Theorem Proving in Higher Order Logics, 22nd International
  Conference, TPHOLs 2009, Munich, Germany, August 17-20, 2009.
  Proceedings}}\else{\emph{TPHOLs}}\fi{}. 2009.

\bibitem[Melquiond and Rieu{-}Helft(2018)]{why3reflection}
G.~Melquiond and R.~Rieu{-}Helft.
\newblock \href {http://dx.doi.org/10.1007/978-3-319-94205-6\_13} {A {Why3}
  framework for reflection proofs and its application to {GMP's} algorithms}.
\newblock \iflongrefs{In D.~Galmiche, S.~Schulz, and R.~Sebastiani, editors,
  \emph{Automated Reasoning - 9th International Joint Conference, {IJCAR} 2018,
  Held as Part of the Federated Logic Conference, FloC 2018, Oxford, UK, July
  14-17, 2018, Proceedings}}\else{\emph{IJCAR}}\fi{}. 2018.

\bibitem[Nanevski et~al.(2008)Nanevski, Morrisett, and Birkedal]{nmb08htt}
A.~Nanevski, J.~G. Morrisett, and L.~Birkedal.
\newblock \href {http://ynot.cs.harvard.edu/papers/jfpsep07.pdf} {Hoare type
  theory, polymorphism and separation}.
\newblock \iflongrefs{\emph{J. Funct. Program.}}\else{\emph{JFP}}\fi{},
  18\penalty0 (5-6):\penalty0 865--911, 2008.

\bibitem[Nanevski et~al.(2010)Nanevski, Vafeiadis, and Berdine]{NanevskiVB10}
A.~Nanevski, V.~Vafeiadis, and J.~Berdine.
\newblock \href {http://dx.doi.org/10.1145/1706299.1706331} {Structuring the
  verification of heap-manipulating programs}.
\newblock \iflongrefs{In M.~V. Hermenegildo and J.~Palsberg, editors,
  \emph{Proceedings of the 37th {ACM} {SIGPLAN-SIGACT} Symposium on Principles
  of Programming Languages, {POPL} 2010, Madrid, Spain, January 17-23,
  2010}}\else{\emph{POPL}}\fi{}. 2010.

\bibitem[Nogin(2002)]{Nogin02}
A.~Nogin.
\newblock \href {http://dx.doi.org/10.1007/3-540-45685-6_18} {Quotient types:
  {A} modular approach}.
\newblock \iflongrefs{In \emph{15th International Conference on Theorem Proving
  in Higher Order Logics}}\else{\emph{TPHOLs}}\fi{}. 2002.

\bibitem[Paulson and Blanchette(2010)]{PaulsonB10}
L.~C. Paulson and J.~C. Blanchette.
\newblock \href {https://www21.in.tum.de/~blanchet/iwil2010-sledgehammer.pdf}
  {Three years of experience with {Sledgehammer}, a practical link between
  automatic and interactive theorem provers}.
\newblock \iflongrefs{In \emph{The 8th International Workshop on the
  Implementation of Logics, {IWIL} 2010, Yogyakarta, Indonesia, October 9,
  2011}}\else{\emph{IWIL}}\fi{}. 2010.

\bibitem[Protzenko et~al.(2017)Protzenko, Zinzindohou\'e, Rastogi,
  Ramananandro, Wang, {Zanella-B\'eguelin}, Delignat-Lavaud, Hri\c{t}cu,
  Bhargavan, Fournet, and Swamy]{lowstar}
J.~Protzenko, J.-K. Zinzindohou\'e, A.~Rastogi, T.~Ramananandro, P.~Wang,
  S.~{Zanella-B\'eguelin}, A.~Delignat-Lavaud, C.~Hri\c{t}cu, K.~Bhargavan,
  C.~Fournet, and N.~Swamy.
\newblock \href {http://dx.doi.org/10.1145/3110261} {Verified low-level
  programming embedded in {F*}}.
\newblock \emph{{PACMPL}}, 1\penalty0 ({ICFP}):\penalty0 17:1--17:29,
  \iflongrefs{Sept. }\fi{}2017.

\bibitem[Stampoulis and Shao(2010)]{StampoulisS10}
A.~Stampoulis and Z.~Shao.
\newblock \href {http://dx.doi.org/10.1145/1863543.1863591} {{VeriML}: typed
  computation of logical terms inside a language with effects}.
\newblock \iflongrefs{In P.~Hudak and S.~Weirich, editors, \emph{Proceeding of
  the 15th {ACM} {SIGPLAN} international conference on Functional programming,
  {ICFP} 2010, Baltimore, Maryland, USA, September 27-29,
  2010}}\else{\emph{ICFP}}\fi{}. 2010.

\bibitem[Swamy et~al.(2013)Swamy, Weinberger, Schlesinger, Chen, and
  Livshits]{fstar-pldi13}
N.~Swamy, J.~Weinberger, C.~Schlesinger, J.~Chen, and B.~Livshits.
\newblock \href
  {https://www.microsoft.com/en-us/research/publication/verifying-higher-order-programs-with-the-dijkstra-monad/}
  {Verifying higher-order programs with the {Dijkstra} monad}.
\newblock \iflongrefs{In \emph{Proceedings of the 34th annual ACM SIGPLAN
  conference on Programming Language Design and
  Implementation}}\else{\emph{PLDI}}\fi{}, 2013.

\bibitem[Swamy et~al.(2016)Swamy, Hri\c{t}cu, Keller, Rastogi, Delignat-Lavaud,
  Forest, Bhargavan, Fournet, Strub, Kohlweiss, Zinzindohou\'e, and
  {Zanella-B\'eguelin}]{mumon}
N.~Swamy, C.~Hri\c{t}cu, C.~Keller, A.~Rastogi, A.~Delignat-Lavaud, S.~Forest,
  K.~Bhargavan, C.~Fournet, P.-Y. Strub, M.~Kohlweiss, J.-K. Zinzindohou\'e,
  and S.~{Zanella-B\'eguelin}.
\newblock \href {https://www.fstar-lang.org/papers/mumon/} {Dependent types and
  multi-monadic effects in {F*}}.
\newblock \iflongrefs{In \emph{43rd ACM SIGPLAN-SIGACT Symposium on Principles
  of Programming Languages (POPL)}}\else{\emph{POPL}}\fi{}. \iflongrefs{Jan.
  }\fi{}2016.

\bibitem[Vazou et~al.(2014)Vazou, Seidel, Jhala, Vytiniotis, and
  Jones]{liquidhaskell}
N.~Vazou, E.~L. Seidel, R.~Jhala, D.~Vytiniotis, and S.~L.~P. Jones.
\newblock \href
  {https://goto.ucsd.edu/~nvazou/refinement_types_for_haskell.pdf} {Refinement
  types for {Haskell}}.
\newblock \iflongrefs{In \emph{Proceedings of the 19th {ACM} {SIGPLAN}
  international conference on Functional programming
  (ICFP'14)}}\else{\emph{ICFP}}\fi{}, 2014.

\bibitem[Vazou et~al.(2018)Vazou, Tondwalkar, Choudhury, Scott, Newton, Wadler,
  and Jhala]{VazouTCSNWJ18}
N.~Vazou, A.~Tondwalkar, V.~Choudhury, R.~G. Scott, R.~R. Newton, P.~Wadler,
  and R.~Jhala.
\newblock \href {http://dx.doi.org/10.1145/3158141} {Refinement reflection:
  complete verification with {SMT}}.
\newblock \emph{{PACMPL}}, 2\penalty0 ({POPL}):\penalty0 53:1--53:31, 2018.

\bibitem[Wadler(1987)]{WadlerViews}
P.~Wadler.
\newblock \href {http://dx.doi.org/10.1145/41625.41653} {Views: A way for
  pattern matching to cohabit with data abstraction}.
\newblock \iflongrefs{In \emph{Proceedings of the 14th ACM SIGACT-SIGPLAN
  Symposium on Principles of Programming Languages}}\else{\emph{POPL}}\fi{}.
  1987.

\bibitem[Wenzel(2017)]{isar}
M.~Wenzel.
\newblock The {Isabelle/Isar} reference manual.
\newblock Available at \url{http://isabelle.in.tum.de/doc/isar-ref.pdf},
  \iflongrefs{Oct. }\fi{}2017.

\bibitem[Ziliani et~al.(2015)Ziliani, Dreyer, Krishnaswami, Nanevski, and
  Vafeiadis]{ZilianiDKNV15}
B.~Ziliani, D.~Dreyer, N.~R. Krishnaswami, A.~Nanevski, and V.~Vafeiadis.
\newblock \href {http://dx.doi.org/10.1017/S0956796815000118} {{Mtac}: {A}
  monad for typed tactic programming in {Coq}}.
\newblock \iflongrefs{\emph{J. Funct. Program.}}\else{\emph{JFP}}\fi{}, 25,
  2015.

\bibitem[Zinzindohou\'e et~al.(2017)Zinzindohou\'e, Bhargavan, Protzenko, and
  Beurdouche]{haclstar}
J.-K. Zinzindohou\'e, K.~Bhargavan, J.~Protzenko, and B.~Beurdouche.
\newblock \href {http://eprint.iacr.org/2017/536} {{HACL}*: {A} verified modern
  cryptographic library}.
\newblock \iflongrefs{In \emph{{ACM} Conference on Computer and Communications
  Security}}\else{\emph{CCS}}\fi{}. 2017.

\end{thebibliography}

\end{document}